\documentclass[11pt]{article}

\usepackage{booktabs}
\usepackage{amsmath, amssymb, amsthm, thmtools, amsfonts}
\usepackage{bbm}
\usepackage[utf8]{inputenc}
\usepackage{hyperref}
\usepackage[margin=1in]{geometry}

\usepackage{ifthen}
\usepackage{tikz}
\usetikzlibrary{positioning,decorations.pathreplacing}
\usepackage{appendix}
\usepackage{graphicx}
\usepackage{adjustbox}
\usepackage{array}
\usepackage{makecell}

\usepackage{color}
\usepackage{pagecolor}
\usepackage[ruled,linesnumbered]{algorithm2e}
\usepackage[noend]{algpseudocode}
\usepackage{epstopdf}
\usepackage[textsize=tiny]{todonotes}
 \usepackage{relsize}

\newcommand\eps{\varepsilon}

\usepackage{framed}
\usepackage[framemethod=tikz]{mdframed}
\usepackage[bottom]{footmisc}
\usepackage[shortlabels]{enumitem}
\setitemize{noitemsep,topsep=3pt,parsep=3pt,partopsep=3pt}
\usepackage[font=small]{caption}
\usepackage{xspace}

\newtheorem{theorem}{Theorem}[section]
\newtheorem{lemma}[theorem]{Lemma}
\newtheorem{fact}[theorem]{Fact}
\newtheorem{meta-theorem}[theorem]{Meta-Theorem}

\newtheorem{remark}[theorem]{Remark}
\newtheorem{corollary}[theorem]{Corollary}

\newtheorem{definition}[theorem]{Definition}

\newtheorem*{thm:msf_algo}{Theorem~\ref{thm:msf_algo}}
\newtheorem*{thm:sf_algo}{Theorem~\ref{thm:sf_algo}}

\definecolor{darkgreen}{rgb}{0,0.5,0}
\definecolor{darkred}{rgb}{0.9,0,0}
\definecolor{grayish}{rgb}{0.8,0.8,0.8}
\usepackage{hyperref}
\hypersetup{
    unicode=false,          
    colorlinks=true,        
    linkcolor=darkred,          
    citecolor=darkgreen,        
    filecolor=magenta,      
    urlcolor=cyan           
}
\usepackage[capitalize, nameinlink]{cleveref}
\crefname{theorem}{Theorem}{Theorems}
\Crefname{lemma}{Lemma}{Lemmas}
\Crefname{claim}{Claim}{Claims}
\Crefname{fact}{Fact}{Facts}
\algnewcommand\algorithmicswitch{\textbf{switch}}
\algnewcommand\algorithmiccase{\textbf{case}}

\algdef{SE}[SWITCH]{Switch}{EndSwitch}[1]{\algorithmicswitch\ #1\ \algorithmicdo}{\algorithmicend\ \algorithmicswitch}%
\algdef{SE}[CASE]{Case}{EndCase}[1]{\algorithmiccase\ #1}{\algorithmicend\ \algorithmiccase}%
\algtext*{EndSwitch}%
\algtext*{EndCase}%

\newcommand{\bigO}{\mathcal{O}}

\renewcommand{\paragraph}[1]{\vspace{0.15cm}\noindent {\bf #1}:}

\mathchardef\mhyphen="2D

\newcommand{\MPC}{$\mathsf{MPC}$\xspace}

\newcommand{\cclique}{Congested Clique\xspace}
\newcommand{\MST}{$\mathsf{MST}$\xspace}

\newcommand{\congest}{$\mathsf{CONGEST}$\xspace}

\newcommand{\degree}{\mathsf{degree}\xspace}
\newcommand{\ID}{\mathsf{ID\xspace}}

\newcommand{\set}[1]{\left\{#1\right\}}
\newcommand{\paren}[1]{\mathopen{}\left(#1\right)\mathclose{}}

\newcommand{\ceil}[1]{\mathopen{}\left\lceil#1\right\rceil\mathclose{}}

\newcommand{\card}[1]{\mathopen{}\left|#1\right|\mathclose{}}

\renewcommand{\paragraph}[1]{\vspace{0.15cm}\noindent {\bf #1}:}



\newcommand{\FullOrShort}{full}

\ifthenelse{\equal{\FullOrShort}{full}}{
    
  \newcommand{\fullOnly}[1]{#1}
  \newcommand{\shortOnly}[1]{}
  
  }{

    \newcommand{\fullOnly}[1]{}
    \newcommand{\IncludePictures}[1]{}
   
  }


\begin{document}

\date{}

\title{A Deterministic Algorithm for the MST Problem \\ in Constant Rounds of Congested Clique}

\author{\emph{Krzysztof Nowicki}\thanks{This research is supported by the Polish National Science Centre, under projects
number 2017/25/B/ST6/02010 and 2019/32/T/ST6/00566.}\\
\small University of Wrocław\\
\small \texttt{knowicki@cs.uni.wroc.pl}
}
\setcounter{page}{0}
\thispagestyle{empty}

\maketitle

\begin{abstract}
In this paper we show that the Minimum Spanning Tree problem (MST) can be solved \emph{deterministically} in $\mathcal{O}(1)$ rounds of the $\mathsf{Congested}$ $\mathsf{Clique}$ model. 

In the $\mathsf{Congested}$ $\mathsf{Clique}$ model there are $n$ players that perform computation in synchronous rounds. Each round consist of a phase of local computation and a phase of communication, in which each pair of players is allowed to exchange $\mathcal{O}(\log n)$ bit messages. The studies of this model began with the MST problem: in the paper by Lotker et al.[SPAA'03, SICOMP'05] that defines the $\mathsf{Congested}$ $\mathsf{Clique}$ model the authors give a deterministic $\mathcal{O}(\log \log n)$ round algorithm that improved over a trivial $\mathcal{O}(\log n)$ round adaptation of Borůvka's algorithm.

There was a sequence of gradual improvements to this result: an $\mathcal{O}(\log \log \log n)$ round algorithm by Hegeman et al. [PODC'15], an $\mathcal{O}(\log^* n)$ round algorithm by Ghaffari and Parter, [PODC'16] and an $\mathcal{O}(1)$ round algorithm by Jurdziński and Nowicki, [SODA'18], but all those algorithms were randomized, which left the question about the existence of any deterministic $o(\log \log n)$ round algorithms for the Minimum Spanning Tree problem open.

Our result resolves this question and establishes that $\mathcal{O}(1)$ rounds is enough to solve the MST problem in the $\mathsf{Congested}$ $\mathsf{Clique}$ model, even if we are not allowed to use any randomness. Furthermore, the amount of communication needed by the algorithm makes it applicable to some variants of the $\mathsf{MPC}$ model.

\end{abstract}

\bigskip
\setcounter{page}{0}
\thispagestyle{empty}
\newpage

\section{Introduction and related work}\label{s:intro}
In this paper, we present a simple deterministic algorithm for the Minimum Weight Spanning Tree problem (\MST) that needs only a constant number of rounds of \cclique. 

In \emph{the Minimum Weight Spanning Tree problem}, for a connected weighted input graph we have to compute the lightest acyclic set of edges that connects all vertices of the input graph. Our result also applies to \emph{the Minimum Weight Spanning Forest problem}, in which the input graph may be not connected and as a result we need to identify a minimum weight spanning tree of each connected component of the input graph. 

This is one of the central problems in graph algorithmics and solving it is used as a subroutine in many more complicated algorithms. The studies on this problem began over 90 years ago and the first algorithm for was proposed by Otakar Borůvka in 1926 \cite{Boruvka}.

The \cclique model was introduced by Lotker et al. \cite{lotker2003mst,DBLP:journals/siamcomp/LotkerPPP05} in a paper that also studies the \MST problem -- more precisely the authors study the \MST problem in a distributed model of computing (\congest), with an assumption that the communication network is a clique -- the name \cclique was coined only in some later papers. This makes the \MST problem not only a possibly useful subproblem to be solved, but also a core problem that was studied basically since the conception of the \cclique model \cite{DBLP:journals/siamcomp/LotkerPPP05,Hegeman:2015:TOB:2767386.2767434,Ghaffari:2016:MLR:2933057.2933103,DBLP:journals/corr/Korhonen16,DBLP:conf/soda/Jurdzinski018}.

\subsection{\cclique model}
The \cclique is a model of distributed (or parallel) computation, in which we have $n$ players (processors) performing computation in synchronous rounds. Each player corresponds to a single vertex of the input graph, and initially knows all edges that are incident to this vertex.

A single round consists of a phase of local computation, in which all players simultaneously perform computation, and a phase of communication, in which each pair of players can simultaneously exchange a pair of messages of size $\bigO(\log n)$ bits. In other words, it is a synchronous message passing model, in which the communication graph is a clique, the communicating players send and receive messages simultaneously, and the number of bits in a single message is $\bigO(\log n)$. 

As a result, we usually require that some player knows the answer computed by the algorithm. The exception are the problems for which the answer is too large to fit into memory of a single processor; then we allow that each player needs to know only a part of the result, but there are no general rules for this kind of problems and the specific requirements towards the output are usually tailored to the problem.

The \cclique model initially was considered as a special case of the \congest model in which the players may communicate only if in the input graph there is an edge between vertices corresponding to those players. The studies of \cclique began with adding an assumption that the input graph and, more importantly, the communication network in the \congest is a clique. Currently, the widely accepted definition says that in \cclique the communication network is a $n$ node clique and the input is an arbitrary $n$ vertex graph. 

Besides being just a special case of the \congest model, \cclique might be used to model the \emph{overlay networks} and has ties to some models of parallel computing.

\paragraph{Overlay networks}
Lotker et al. \cite{lotker2003mst,DBLP:journals/siamcomp/LotkerPPP05} propose that \cclique may be a good theoretical model to study the overlay networks: an abstraction that separates the problems emerging form the topology of the communication network from the problems emerging from the structure of the problem we try to solve. In other words, it allows us to study a model in which each pair of nodes can communicate, and we do not consider any details of how this communication is executed by the underlying network.

\paragraph{Parallel Computing}
\cclique can be also considered as a model of parallel computing, in particular, it is closely related to the \textbf{M}assively \textbf{P}arallel \textbf{C}omputation (\MPC) model \cite{karloff2010MapReduce,DBLP:conf/sirocco/HegemanP14,DBLP:journals/corr/abs-1802-10297}. In \MPC the computation is performed by a set of machines in synchronous rounds; each round consisting of a phase of local computation and a phase of communication. In the communication phase each pair of machines can exchange some number of messages, as long as each machine sends and receives a number of messages bounded by some parameter $S$, and total communication is bounded by $\bigO(N)$, where $N$ is the size of the input.

In the \cclique model, the restrictions on the number of messages that can be exchanged between the processors are stronger, as each pair of processors can exchange only one message. The problem of exchanging larger amounts of messages is called \emph{the routing problem}, and some of its variants can be solved in $\bigO(1)$ rounds, deterministically, by the routing protocol proposed by Lenzen \cite{Lenzen:2013:ODR:2484239.2501983}. More precisely, we can route all messages to their destinations whenever this problem looks like the communication in the \MPC model, i.e. when each processor is a source and destination of $\bigO(n)$ messages. This immediately gives that one can simulate some variants of the \MPC model (with $S \in \bigO(n)$) in \cclique. 

In the remaining part of the paper, we use the Lenzen's routing implicitly in several places, i.e. we show that in order to perform some task, each machine needs to send and receive a batch of $\bigO(n)$ messages, therefore exchanging all messages and performing this task can be done in $\bigO(1)$ rounds.

The connection between the \cclique and \MPC is bidirectional, as any algorithm that has \emph{small communication} in the \cclique model can be applied to the \MPC model \cite{DBLP:conf/sirocco/HegemanP14,DBLP:journals/corr/abs-1802-10297}, for properly defined notion of \emph{small communication}. The routing protocol of Lenzen on its own requires $\Theta(n^2)$ messages, no matter what is the number of messages to be send. Therefore, each algorithm that uses it needs $\Theta(n^2)$ messages in \cclique to be exchanged. On the other hand, in the \MPC model we get communication that is handled by the Lenzen's routing protocol for free. Thus, by \emph{small communication} we mean that in a single round of the \cclique algorithm the total number of messages that are send directly is $\bigO(n)$ per vertex and $O(N)$ in total, and the total number of messages that are send and received via routing protocol is $\bigO(n)$ per vertex and $O(N)$ in total. In other words, for the purpose of this paper we decided to measure the communication complexity of the \cclique algorithms without taking into account the cost of the routing protocol by Lenzen.

\subsection{Minimum Spanning Tree problem in \cclique} \label{sec:related} 
The MST problem was studied already in the seminal paper \cite{DBLP:journals/siamcomp/LotkerPPP05} that introduced the \cclique model. In \cite{DBLP:journals/siamcomp/LotkerPPP05} Lotker et al. study the MST problem in a distributed \congest network of diameter $1$ and propose an algorithm that needed only $\bigO(\log \log n)$ rounds to determine the MST of the input graph. 

The first improvement to this results came only after around 10 years, when randomized graph sketching techniques \cite{ahn2012analyzing} were applied by Hegeman et al. \cite{Hegeman:2015:TOB:2767386.2767434} to obtain an $\bigO(\log \log \log n)$ round algorithm. This paper also established a $\bigO(1)$ round, randomized reduction that reduces a single instance of the MST problem to several instances of the Connected Component problems. The two next papers improved the complexity of MST algorithm by providing better Spanning Forest algorithms that can be run in parallel: in \cite{Ghaffari:2016:MLR:2933057.2933103} the authors propose a $\bigO(\log^* n)$ round algorithm, and finally \cite{DBLP:conf/soda/Jurdzinski018} shows that MST can be found in $\bigO(1)$ rounds.

All those $o(\log \log n)$ round algorithms are randomized and heavily rely on the sketching techniques. The only alternative deterministic algorithm for the MST problem, by Korhonen \cite{DBLP:journals/corr/Korhonen16}, also has $\bigO(\log \log n)$ bound on the round complexity. Therefore, the $\bigO(1)$ round deterministic algorithm we propose in this paper not only shows that the problem can be solved also in a deterministic way, but provides the first improvement in the complexity of deterministic MST algorithms since the beginning of studies on MST problem in \cclique model.
\section{Our results}\label{s:results}

The most significant result presented in this paper is a deterministic $\bigO(1)$ round algorithm for the Minimum Spanning Tree problem.

\begin{theorem}\label{thm:msf_algo}
The Minimum Spanning Tree problem can be solved deterministically in $\bigO(1)$ rounds of the \cclique model, using $\bigO(m)$ communication, $m$ is the number of edges of the input graph.
\end{theorem}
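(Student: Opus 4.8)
The overall strategy is to follow the classical reduction from MST to Connected Components / Spanning Forest, established (in its randomized form) by Hegeman et al., but to make every step deterministic and constant-round. The plan is to proceed in three stages: (i) a deterministic preprocessing stage that sparsifies the instance and bounds the number of distinct edge weights we need to track; (ii) a deterministic constant-round Spanning Forest subroutine, run in parallel on a logarithmic number of threshold graphs $G_{\le w}$; and (iii) a combination step that stitches the forest answers together into the global MST. The key conceptual point is that an edge $e=\{u,v\}$ of weight $w$ belongs to the MST if and only if $u$ and $v$ lie in different connected components of the subgraph $G_{<w}$ of edges strictly lighter than $w$; hence if we can compute, deterministically and in $\bigO(1)$ rounds, the connected components of every relevant threshold graph simultaneously, we can locally decide membership in the MST for each edge. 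The $\bigO(m)$ communication bound must be maintained throughout, so all sparsification and routing must be charged carefully against the input size.

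\textbf{Step 1: deterministic sparsification and weight bucketing.} First I would reduce the number of edges each vertex is responsible for. Since $m$ may be as large as $\binom{n}{2}$, but we want $\bigO(m)$ communication overall, the natural move is a Borůvka-style contraction: each vertex keeps only its lightest incident edge (breaking ties by endpoint identifiers to get a deterministic rule), contract, and recurse a constant number of times; each contraction step halves the number of vertices and can be implemented in $\bigO(1)$ rounds using Lenzen's routing because each vertex sends/receives $\bigO(n)$ messages. After $\bigO(\log\log n)$ — no, we cannot afford that many — so instead I would use the contraction only as a tool to reduce to the regime $m = \bigO(n^{1+\epsilon})$ or similar, or better, rely on the Spanning Forest subroutine itself to absorb this. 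The cleaner route: partition the edge weights into $\bigO(\log n)$ buckets by powers of two (or, after relabeling, into at most $m$ distinct ranks), and note we only need the threshold graphs at bucket boundaries, giving $\bigO(\log n)$ instances of Spanning Forest to solve.

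\textbf{Step 2: parallel deterministic Spanning Forest on all threshold graphs.} This is where I expect the main obstacle. I would invoke a deterministic $\bigO(1)$-round Connected Components / Spanning Forest algorithm — presumably developed in the body of this paper, since no prior deterministic $\bigO(1)$-round such algorithm exists — and run it simultaneously on the $\bigO(\log n)$ threshold graphs $G_{\le w_1}, \dots, G_{\le w_{\log n}}$. Running $\bigO(\log n)$ instances in parallel is the delicate part: naively each vertex would need $\bigO(n \log n)$ bandwidth, which violates the $\bigO(\log n)$-bit-per-edge budget. The resolution should be that the threshold graphs are nested ($G_{\le w_i} \subseteq G_{\le w_{i+1}}$), so the Spanning Forest computations can share work — effectively one runs a single "weighted" Spanning Forest / Borůvka process that tracks, for each merged component, the minimum weight at which it formed. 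The hard part will be establishing that this shared computation is correct and still terminates in $\bigO(1)$ rounds deterministically, i.e., that the deterministic sparsification used inside the Spanning Forest subroutine (replacing the randomized sketches of prior work) achieves a constant-factor progress guarantee per round without blowing up communication beyond $\bigO(m)$.

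\textbf{Step 3: local MST decision and output.} Once component labels of every threshold graph are known, for each edge $e = \{u,v\}$ of weight $w$, the vertex holding $e$ checks whether $u$ and $v$ have different labels in $G_{<w}$; if so, $e$ is a candidate MST edge. Among edges of equal weight one must additionally break cycles consistently (e.g. via a deterministic tie-break on edge identifiers, or by a secondary Spanning Forest within each weight class restricted to the candidate edges), which again is a constant-round routine. Finally the selected $\bigO(n)$ MST edges are collected at a single player (or left distributed, per the model's conventions) using one more round of routing. Correctness follows from the cut/cycle property of MSTs; the round and communication bounds follow by summing the $\bigO(1)$-round, $\bigO(m)$-communication cost of each stage. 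I expect essentially all the real work — and the novelty of the paper — to be concentrated in making Step 2 deterministic; Steps 1 and 3 should be routine adaptations of known reductions.
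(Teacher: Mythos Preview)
You have the right high-level picture: reduce MST to many Spanning Forest instances via a Kruskal-style cut rule, and plug in a deterministic $\bigO(1)$-round Spanning Forest subroutine for each instance. You are also right that the novel technical content lives inside that subroutine. But your Step~1/Step~3 reduction has a genuine gap, and it is not merely a presentational wobble.

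The bucketing by powers of two does not yield the exact MST: two edges in the same power-of-two bucket have genuinely different weights, so ``break ties within a weight class by a secondary Spanning Forest'' does not apply, and the cut/cycle test against $G_{<w}$ at bucket boundaries can misclassify edges. The correct bucketing (which you gesture at with ``after relabeling, into at most $m$ distinct ranks'') is to sort edges by weight and split the sorted sequence into contiguous groups of $n$ edges; then for each group you need the connected components of the prefix, and a single processor holding that prefix's component labeling together with the $n$ edges of the group can simulate the corresponding stretch of Kruskal locally. The problem is that this produces $m/n$ prefix instances whose total size is $\Theta((m/n)^2 \cdot n) = \Theta(m^2/n)$, which blows the $\bigO(m)$ budget. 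Your ``share work because the threshold graphs are nested'' suggestion does not fix this: the paper does not exploit nestedness at all, and it is unclear how to turn nestedness into a constant-round, $\bigO(m)$-communication procedure. What the paper actually does, and what you are missing, is a \emph{prior deterministic sparsification} down to $\bigO(\sqrt{mn})$ edges (Lemma~\ref{lem:mst_sparsification}, a strengthening of Korhonen's $\Delta$-partition technique). After that, rank-splitting yields only $\bigO(\sqrt{m/n})$ prefix instances with $\bigO(m)$ total edges, which can then be solved in parallel by the deterministic Spanning Forest algorithm. This sparsification step is exactly the deterministic replacement for the Karger--Klein--Tarjan sampling in Hegeman et al.'s reduction, and without it Steps~1 and~3 do not close.
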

On the top level, the algorithm is based on the reduction from a single instance of MST to many instances of the Connected Components problem proposed by Hegeman et al. \cite{Hegeman:2015:TOB:2767386.2767434}. The only part of this reduction that is randomized algorithm that reduces a single instance of the MST problem, to two instances, each with $\bigO(n^{3/2})$ edges. Here, we use that one can replace the randomized part of this reduction with sparsification algorithm by Korhonen \cite{DBLP:journals/corr/Korhonen16}.
\begin{remark}In \cite{DBLP:journals/corr/Korhonen16} Korhonen shows that one can reduce an instance of the MST problem to another instance of MST problem, with $\bigO(n^{1+\eps})$ edges, in $\bigO(1)$ rounds of \cclique. Therefore, it was known that this sparsification technique can be used as a part of the Hegeman et al. reduction \cite{Hegeman:2015:TOB:2767386.2767434} for the \cclique model. For our purpose, we need to extend this result by:
\begin{itemize}
  \item an analysis of communication complexity that makes the sparsification technique applicable to \MPC,
  \item additional algorithmic tools that allow us to significantly sparsify the input graph in $\bigO(1)$ rounds, even if the input graph is already sparse.
\end{itemize}
\end{remark}

As in the case of \cite{Ghaffari:2016:MLR:2933057.2933103,DBLP:conf/soda/Jurdzinski018} the key contribution that leads to improvements for the MST problem is an algorithm that solves the Spanning Forest problem (and by extension the Connected Components problem) that can be run efficiently in parallel. 

\begin{theorem}\label{thm:sf_algo}
The Spanning Forest problem can be solved deterministically in $\bigO(1)$ rounds of the \cclique model, using $\bigO(m)$ communication, where $m$ is the number of edges of the input graph.
\end{theorem}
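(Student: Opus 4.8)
The plan is to solve the Spanning Forest problem by a Borůvka-style contraction scheme in which each phase shrinks the number of surviving components by a fixed polynomial factor, so that $\bigO(1)$ phases bring the contracted graph down to a size that a single machine can hold; since we only want \emph{a} spanning forest, no min-edge selection rule is needed, but the same skeleton extends to \MST.

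First I would fix the contraction framework. Maintain a partition of $V$ into clusters, each with a representative, and store the cluster (multi)graph in deduplicated form: at most one edge per unordered pair of clusters, and for each kept edge a witness original edge. The invariant is that the cluster graph has at most $\min(m,k^2)$ edges, where $k$ is the current number of clusters, and that contracting the chosen forest keeps it connectivity-equivalent to the input $G$. Once $k\le\sqrt n$ this graph has $\bigO(n)$ edges, so by Lenzen's routing \cite{Lenzen:2013:ODR:2484239.2501983} it can be gathered on one machine, solved locally, and the resulting forest lifted back through the witnesses and broadcast; this finishing step is $\bigO(1)$ rounds and $\bigO(m)$ communication.

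The heart of the proof is a deterministic subroutine that, given a cluster graph on $k$ clusters, in $\bigO(1)$ rounds contracts a connectivity-preserving subforest so that the number of clusters drops to $k^{1-\delta}$ for some absolute constant $\delta>0$, using $\bigO(m)$ communication. The template is the randomized neighborhood-growing of \cite{Ghaffari:2016:MLR:2933057.2933103,DBLP:conf/soda/Jurdzinski018}: learning $2,4,8,\dots$-hop neighborhoods lets one round simulate $\Theta(\log(1/\delta))$ Borůvka steps, but the learned neighborhoods must be kept from exploding past the $\bigO(n)$ bandwidth per machine. The randomized algorithms do this by sampling; I would instead (i) run Korhonen's deterministic sparsifier \cite{DBLP:journals/corr/Korhonen16} to cap the working edge set at $\bigO(k^{1+\eps})$ in each phase, and (ii) supply a deterministic selection rule that still forces a polynomial shrink even when the cluster graph is already sparse --- the case where plain edge-sparsification buys nothing (flagged in the Remark after Theorem~\ref{thm:msf_algo}). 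There the tool has to cut the vertex count directly, e.g. by a globally agreed choice of "hub" clusters into which low-degree clusters are merged, justified by an averaging/potential bound rather than a probabilistic one.

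Iterating this subroutine $\bigO(1/\delta)=\bigO(1)$ times reaches $k\le\sqrt n$; combining with the finishing step, and charging $\bigO(m)$ messages per phase (deduplication bounds the live edge set, Lenzen routing spreads the load so that each original vertex sends and receives $\bigO(n)$ messages per round), yields the claimed $\bigO(1)$ rounds and $\bigO(m)$ communication, which also gives the \MPC applicability. I expect the main obstacle to be step (ii): replacing the sampling argument by a deterministic one that still guarantees a polynomial shrink factor in a single $\bigO(1)$-round batch, with no usable randomness and no pseudorandom object known to be constructible in $\bigO(1)$ \cclique rounds; the remainder is essentially bookkeeping layered on top of \cite{DBLP:journals/corr/Korhonen16} and \cite{Hegeman:2015:TOB:2767386.2767434}.
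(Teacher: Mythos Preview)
Your proposal has a genuine gap exactly where you flag it: step~(ii). You want a deterministic rule that, on an already-sparse cluster graph, still forces a polynomial drop in the cluster count within $\bigO(1)$ rounds, but the ``hub cluster'' idea is only a gesture, and without it the iterative $k\to k^{1-\delta}$ scheme does not close. Korhonen's sparsifier by itself only caps the edge count; it gives no shrinkage of $k$ when the graph is already sparse, so the entire weight of your argument rests on the piece you have not supplied.

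The paper sidesteps iteration altogether. It runs a \emph{single} two-step marking procedure (each vertex first marks an edge to its highest-degree neighbour, then, if possible, an edge to a neighbour that did not mark it) and proves that afterward every vertex of degree $\delta$ lies in a component of size at least $\delta+1$. This degree-to-component-size coupling is the deterministic substitute for your missing step~(ii). The resulting component graph $G_{\mathcal C}$ is then split into $\bigO(\log n)$ edge-disjoint layers $G_1,\dots,G_y$ by component-size class: with $V_j$ the components of size in $[2^{j-1},2^j)$ and $x_j$ their total original-vertex mass, $G_i$ keeps exactly the inter-component edges whose lighter endpoint lies in $V_i$. The coupling forces $G_i$ to have at most $x_i 2^i$ edges (small component $\Rightarrow$ small original degrees) and at most $\sum_{j\ge i} x_j/2^{j-1}$ vertices, so one parallel pass of the $\bigO(\sqrt{mn})$ sparsifier on each $G_i$ leaves $\bigO\bigl(\sum_i\sum_{j\ge i} x_j/2^{j-1-i}\bigr)=\bigO(n)$ edges in total, via a swap of summation and a geometric series in $i$. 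Everything then fits on one processor. So the idea you are missing is not an iterated polynomial shrink but this degree--size lemma together with the size-stratified decomposition that lets a single sparsification pass finish the job.
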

To obtain this result, we propose a rather straightforward extension of the sparsification technique by Korhonen \cite{DBLP:journals/corr/Korhonen16}, which we combine with an approach based on some newly discovered properties of a deterministic part of the algorithm proposed by Jurdziński and Nowicki \cite{DBLP:conf/soda/Jurdzinski018,jurdznski_et_al:LIPIcs:2017:7990}. We give an algorithm that is a proof of \cref{thm:sf_algo} in \cref{sec:sf_algo}. 

In our algorithm for the Minimum Spanning Tree problem, apart from the communication following from the Lenzen's routing protocol, every processor needs to send and receive only $\bigO(n)$ messages. Furthermore, the total number of messages that are exchanged in a single round is $\bigO(m)$. Therefore, our MST algorithms can be implemented in the \MPC model with $\bigO(n)$ memory per machine and $\bigO(m)$ global memory.

\begin{corollary}
The Minimum Spanning Tree problem can be solved deterministically in $\bigO(1)$ rounds of the \MPC model that uses $\bigO(m)$ global memory and $\bigO(n)$ memory per machine, where $n$ is the number of vertices and $m$ is the number of edges of the input graph. 
\end{corollary}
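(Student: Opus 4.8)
The plan is to derive the corollary from \cref{thm:msf_algo} together with the standard simulation of \cclique by \MPC. First I would recall the simulation framework of Hegeman and Pemmaraju \cite{DBLP:conf/sirocco/HegemanP14} (see also \cite{DBLP:journals/corr/abs-1802-10297}): a \cclique algorithm runs in the \MPC model with $\bigO(S)$ memory per machine and $\bigO(M)$ global memory provided that in every round each player sends and receives $\bigO(S)$ words --- counting both the messages sent directly and those handled by Lenzen's routing protocol \cite{Lenzen:2013:ODR:2484239.2501983} --- the total number of words moved in the round is $\bigO(M)$, the input can be laid out with $\bigO(S)$ words per machine and $\bigO(M)$ words in total, and each local computation fits in $\bigO(S)$ working memory. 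The reason this works is that an \MPC round already realizes the all-to-all exchange that Lenzen's protocol implements inside \cclique, so the $\Theta(n^2)$ cost of that protocol is not charged in \MPC; this is exactly the ``small communication'' regime of \cref{sec:related}.

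Next I would instantiate the framework with $S = \bigO(n)$ and $M = \bigO(m)$, using $\bigO(m/n)$ machines (a constant number of machines when $m = \bigO(n)$, and observing that at most $2m$ vertices are non-isolated, so isolated vertices are output as singleton components with no communication). The \cclique players are distributed over the machines so that the total degree per machine is $\bigO(n)$; this is possible since $\sum_v \deg(v) = 2m$ and can be arranged in $\bigO(1)$ rounds, which gives the input-layout hypothesis. The per-round hypotheses are precisely the communication bounds stated just before the corollary and proved as part of \cref{thm:msf_algo}: apart from the implicit use of Lenzen's routing, every processor sends and receives $\bigO(n)$ messages, the total per round is $\bigO(m)$, and the routed messages obey the same bounds; since the aggregate live data is $\bigO(m)$ at every point in time, a rebalancing step --- again free in \MPC --- keeps $\bigO(n)$ words on each machine throughout the execution. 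Local computation costs nothing extra, since a \cclique player never holds more than $\bigO(n)$ words at any time.

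Combining the bounds, each of the $\bigO(1)$ \cclique rounds of \cref{thm:msf_algo} becomes $\bigO(1)$ \MPC rounds (one round of direct messages plus $\bigO(1)$ routing and rebalancing rounds), so the entire computation runs in $\bigO(1)$ \MPC rounds with $\bigO(n)$ memory per machine and $\bigO(m)$ global memory. I do not expect a genuine obstacle here; the one point that needs care is the memory bookkeeping --- checking that in every round, including the rounds that internally sparsify the graph, both the directly sent and the routed messages are counted and that all auxiliary objects produced (candidate edge sets, component identifiers, and the like) are $\bigO(n)$ per machine and $\bigO(m)$ in aggregate, rather than, say, $\bigO(n)$ per machine but $\bigO(n^2)$ globally.
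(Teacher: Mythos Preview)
Your proposal is correct and follows essentially the same approach as the paper: the corollary is not given a standalone proof there but is stated as an immediate consequence of \cref{thm:msf_algo} together with the observation (made in the paragraph just preceding the corollary) that, aside from Lenzen's routing, every processor sends and receives $\bigO(n)$ messages and the total per round is $\bigO(m)$, which is exactly the ``small communication'' condition that makes the \cclique-to-\MPC simulation of \cite{DBLP:conf/sirocco/HegemanP14,DBLP:journals/corr/abs-1802-10297} go through. Your write-up simply spells out the simulation hypotheses and the input layout more explicitly than the paper does.
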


\subsection{Structure of the remaining part of this paper}
In \cref{sec:sparsify} we present a variant of Korhonen's sparsification technique that is adjusted to our needs. Then, in \cref{sec:sf_algo} we present the algorithm for the Spanning Forest problem. Finally, in \cref{sec:msf_algo} we briefly explain how the algorithm for the Spanning Forest problem can be run in parallel to fit into the reduction by Hegeman et al. \cite{Hegeman:2015:TOB:2767386.2767434}.

\section{Deterministic Sparsification in the \cclique}\label{sec:sparsify}
In this section we present a variant of the sparsification technique by Korhonen \cite{DBLP:journals/corr/Korhonen16} that can be applied to sparser graphs. Furthermore, we show that the communication complexity of this algorithm is $\bigO(m)$.

\begin{lemma}\label{lem:mst_sparsification}
  There is a deterministic, $\bigO(1)$ round \cclique algorithm, using $\bigO(m)$ messages that reduces an instance of the \MST problem on a graph with $n$ vertices and $m$ edges to an instance of the \MST problem that has $\bigO(n)$ vertices and $\bigO(\sqrt{mn})$ edges. 
\end{lemma}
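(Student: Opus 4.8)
The plan is to follow the classical cycle‑property edge‑filtering, but with a partition whose coarseness is tuned to the density $m/n$ of the input. First make all edge weights pairwise distinct by appending endpoint identifiers as a tie‑break, so the minimum spanning forest $\mathrm{MSF}(\cdot)$ is unique. Fix a partition $V=V_1\cup\dots\cup V_k$. By the cycle property, if an edge $e=\{u,v\}$ with $u\in V_i,\ v\in V_j$ is \emph{not} in $\mathrm{MSF}\paren{G[V_i\cup V_j]}$, then $G[V_i\cup V_j]\subseteq G$ already contains a $u$–$v$ path of lighter edges, so $e\notin\mathrm{MSF}(G)$. Hence $\mathrm{MSF}(G)\subseteq E':=\bigcup_{i<j}\mathrm{MSF}\paren{G[V_i\cup V_j]}$, and since $E'\subseteq E$ while $(V,E')$ has the same connected components as $G$ (it contains $\mathrm{MSF}(G)$), one gets $\mathrm{MSF}(V,E')=\mathrm{MSF}(G)$; so it suffices to output $(V,E')$. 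This instance has $n=\bigO(n)$ vertices and at most $\sum_{i<j}\paren{\card{V_i}+\card{V_j}}=(k-1)n$ edges, so taking $k=\Theta\paren{\sqrt{m/n}}$ yields $\card{E'}=\bigO\paren{\sqrt{mn}}$ (if $m=\bigO(n)$ the lemma is vacuous, since then $m=\bigO(\sqrt{mn})$ and one returns the input). Note $k\le\sqrt n$, so there are fewer than $n$ pair‑subgraphs (indeed $\binom{k}{2}<n$).

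Next I would implement the computation of $E'$ in $\bigO(1)$ rounds with $\bigO(m)$ messages. Assign a processor $p_{ij}$ to each pair $\{i,j\}$; using Lenzen's routing, ship every crossing edge to the one processor that needs it ($\bigO(m)$ messages), let $p_{ij}$ determine $\mathrm{MSF}\paren{G[V_i\cup V_j]}$ locally and keep its $\bigO(n/k)$ edges, and finally have each vertex learn which of its incident edges survived. The one delicate point is edges internal to a single part $V_a$: broadcasting each of them to all $k-1$ processors $p_{ac}$ would cost $\bigO(mk)$ messages. I would remove this by first replacing, inside every part, $E\paren{G[V_a]}$ with $\mathrm{MSF}\paren{G[V_a]}$ (harmless by the cycle property, and $p_{ij}$ only needs these thinned internal sets plus the $V_i$–$V_j$ crossing edges to recover $\mathrm{MSF}\paren{G[V_i\cup V_j]}$); this leaves $\bigO(n/k)$ internal edges per part, i.e.\ $\bigO(n)$ in total, whose broadcast to $\bigO(k)$ processors each costs $\bigO(nk)=\bigO(\sqrt{mn})=\bigO(m)$ messages.

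I expect the crux to be the following. Unlike in Korhonen's regime $k=\Theta(\sqrt n)$, where every part has $<\binom{\sqrt n}{2}<n$ internal edges \emph{automatically}, sparsity forces $k=\Theta(\sqrt{m/n})<\sqrt n$, hence parts of $\Theta(\sqrt{n^3/m})>\sqrt n$ vertices, so a single $G[V_a]$ or $G[V_i\cup V_j]$ may carry up to $\Theta\paren{n^3/m}\gg n$ edges --- more than a processor may store or a local step may touch, and too much for the routing bound above. I would attack this with two ingredients. First, work with a \emph{balanced} partition, one for which every $G[V_i\cup V_j]$ induces only $\bigO(n)$ edges: a uniformly random balanced partition has this in expectation (each part and each pair‑cut then receives $\Theta(n)$ edges), so the task is to produce such a partition \emph{deterministically} in $\bigO(1)$ rounds of \cclique. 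Second, whenever a sub‑instance is still oversized, recurse: such sub‑instances live on only $\bigO(\sqrt{n^3/m})=o(n)$ vertices (whenever the lemma is non‑trivial) and have $\bigO(m)$ total size, so the $n$ processors can be allocated to them in proportion to their sizes and the recursion bottoms out, after $\bigO(1)$ levels, at instances small enough to solve inside one processor. Making the deterministic partitioning and the recursive load‑balancing fit in $\bigO(1)$ rounds without ever exceeding the $\bigO(n)$‑memory‑per‑processor and $\bigO(m)$‑total‑communication budgets is the technical heart of the proof --- precisely the ``additional algorithmic tools'' alluded to in the remark.
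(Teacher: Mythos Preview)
Your high-level framework is right --- partition into $k=\Theta(\sqrt{m/n})$ parts, compute pairwise minimum spanning forests, keep their union --- and you have correctly isolated the obstacle: when $m=o(n^2)$ the parts have $\omega(\sqrt n)$ vertices, so a pair-subgraph $G[V_i\cup V_j]$ may carry $\omega(n)$ edges and cannot be handled by one processor. But neither of your two proposed attacks actually closes this gap. For the first, you reduce the question to ``produce such a [balanced] partition deterministically in $\bigO(1)$ rounds'' and stop there; that \emph{is} the hard part, and nothing in your write-up says how to do it. (Your preliminary step ``replace $E(G[V_a])$ by $\mathrm{MSF}(G[V_a])$'' already presupposes the ability to process $G[V_a]$ on one machine, which is exactly the obstruction you are trying to remove.) For the second, the recursion is too vague to be a proof: you do not specify what the recursive call does to an oversized $G[V_i\cup V_j]$, nor why the depth is $\bigO(1)$ rather than, say, $\Theta(\log\log n)$ --- and $\bigO(1)$ depth is precisely what separates this lemma from Korhonen's $\bigO(\log\log n)$ algorithm.

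The paper sidesteps the balanced-partition issue entirely by \emph{changing the graph} rather than cleverly partitioning it. Every vertex of degree $\delta>\Delta_A=2m/n$ is split into $\lceil\delta/\Delta_A\rceil$ copies joined by a path of artificially light edges; the result $G'$ has $\bigO(n)$ vertices and \emph{maximum} degree at most $\Delta_A+2$, and the MST of $G$ is recoverable from the MST of $G'$. Now \emph{any} partition of $V(G')$ into parts of size $\bigO(n/\Delta_A)$ automatically satisfies $|E_{i,j}|\in\bigO(n)$, simply because each part is incident to at most $\bigO(n/\Delta_A)\cdot(\Delta_A+2)=\bigO(n)$ edges. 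One run of (a trivially modified) Korhonen step on this arbitrary partition yields a genuine $\bigO(m/n)$-partition, and a second run of the standard Korhonen step yields an $\bigO(\sqrt{m/n})$-partition with $\bigO(\sqrt{mn})$ edges. The missing idea in your proposal is exactly this vertex-splitting trick, which replaces the delicate ``find a deterministic balanced partition'' task with a one-line degree bound.
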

The remaining part of this section is a proof of \cref{lem:mst_sparsification}. In \cref{subsec:korhonen} we recall the sparsification algorithm by Korhonen.  Then, in \cref{subsec:initial}, we show a preprocessing that allows us to use Korhonen's algorithm to get the result claimed in \cref{lem:mst_sparsification}.

\begin{remark}
Our variant of the sparsification algorithm, if executed as in the paper by Korhonen, needs only $\bigO(\log \log \Delta_A)$ rounds to complete computation, where $\Delta_A$ is an average degree of the input graph.
\end{remark}

\begin{remark}
The sparsification algorithm never uses that the input graph is connected, therefore it can be applied to the Minimum Spanning Forest problem. Furthermore, one can ignore the weights of the graph, which makes the sparsification algorithm applicable also to the Spanning Forest problem.
\end{remark}

\subsection{Deterministic Sparsification via $\Delta$--partitions} \label{subsec:korhonen}
A $\Delta$--partition \footnote{This notion of $\Delta$--partition corresponds to the notion of $\varepsilon$--partition from the paper by Korhonen, with $\Delta = n^{\varepsilon}$.} of the graph $G=(V,E)$is a partition of $V$ into disjoint sets $V_1, V_2, \dots, V_{\Delta}$, such that 
\begin{itemize}
  \item for each $i$, $\card{V_i} \in \bigO(n / \Delta)$,
  \item for each $i,j$\footnote{note that this also includes $i=j$}, $\card{\set{\set{u,v}| u \in V_i,\ v \in V_j, \set{u,v} \in E}} \in \bigO(n / \Delta)$.
\end{itemize}
 One of the useful properties of $\Delta$--partition is that a graph with $\Delta$-partition has only $\bigO(n \Delta)$ edges. This bound holds, because we have only $\Delta^2$ pairs of sets $V_i, V_j$, and for each such pair there are at most $\bigO(n / \Delta)$ edges, which in total gives $\Delta^2 \cdot \bigO(n / \Delta) = \bigO(n \Delta)$ edges.

The main contribution of the paper by the Korhonen \cite{DBLP:journals/corr/Korhonen16} is an $\bigO(1)$ round deterministic \cclique algorithm that given a graph with $\Delta$--partition computes a graph with a $\sqrt{\Delta}$--partition, while preserving all edges of the minimum spanning tree of the input graph. We state this algorithm as \cref{alg:korhonen}.
\begin{algorithm}
  \SetAlgoLined

  let $V'_i = \bigcup_{j=(i-1)\sqrt{\Delta}+1}^{i\sqrt{\Delta}}V_j$\\
  partition the edges in such a way that for all $i \leq j$ the edges $E_{i,j} = \set{\set{u,v}|\ u \in V'_i,\ v \in V'_j, \set{u,v} \in E}$ are in the memory of a single processor\\
  for all $i\leq j$ compute a minimum spanning forest $F_{i,j}$ of a graph consisting of edges $E_{i,j}$\\
  return graph $ (V,\bigcup_{i,j} \text{edges of }F_{i,j})$ with partition $V'_1, \dots, V'_{\sqrt{\Delta}}$\\
  \caption{Sparsify($G = (V,E)$, $\Delta$--partition)\cite{DBLP:journals/corr/Korhonen16}} \label{alg:korhonen}
\end{algorithm}

Firstly, we recall some properties of the sparsification technique by Korhonen (\cref{lem:sp_alg}), then we discuss its implementation in \cclique.

\begin{lemma}\cite{DBLP:journals/corr/Korhonen16}\label{lem:sp_alg}
  \cref{alg:korhonen} returns a graph $G'$ with a $\sqrt{\Delta}$--partition, such that the minimum spanning tree of $G'$ is also the minimum spanning tree of $G$. 
\end{lemma}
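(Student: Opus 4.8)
The plan is to verify the two assertions of \cref{lem:sp_alg} in turn: that $V'_1,\dots,V'_{\sqrt{\Delta}}$ is a $\sqrt{\Delta}$--partition of the returned graph $G'$, and that $G'$ retains the minimum spanning tree of $G$. Throughout I would assume without loss of generality that all edge weights are distinct (ties can be broken consistently by edge identifiers), so that the minimum spanning tree and all the minimum spanning forests involved are unique and the cycle property applies cleanly.

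For the partition property, I would just chase the two defining conditions. Each $V'_i$ is a union of $\sqrt{\Delta}$ of the sets $V_j$, each of size $\bigO(n/\Delta)$, so $\card{V'_i}\in\bigO(\sqrt{\Delta}\cdot n/\Delta)=\bigO(n/\sqrt{\Delta})$, as required. For the edge count between $V'_i$ and $V'_j$ (say $i\le j$): since $V'_1,\dots,V'_{\sqrt{\Delta}}$ partition $V$, every edge of $G$ with one endpoint in $V'_i$ and one in $V'_j$ lies in $E_{i,j}$ and in no other $E_{k,l}$, so the edges of $G'$ between $V'_i$ and $V'_j$ are precisely those of $F_{i,j}$ with this property. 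As $F_{i,j}$ is a forest on a vertex set contained in $V'_i\cup V'_j$, it has at most $\card{V'_i}+\card{V'_j}\in\bigO(n/\sqrt{\Delta})$ edges, which gives the bound (the diagonal case $i=j$ is identical, with $F_{i,i}$ a forest on $V'_i$).

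For the MST-preservation, the key point is that $\set{E_{i,j}}_{i\le j}$ is a partition of $E$ and that unioning minimum spanning forests of the pieces of an edge partition never drops an MST edge. I would make this precise via the cycle property: if $e\in E_{i,j}\setminus F_{i,j}$, then $e$ closes a cycle with $F_{i,j}$ whose remaining edges all lie in $F_{i,j}\subseteq E_{i,j}\subseteq E$ and are lighter than $e$; hence $e$ is the heaviest edge of a cycle of $G$, and so $e$ is not an edge of the minimum spanning tree of $G$. Therefore every MST edge of $G$ that falls in $E_{i,j}$ already lies in $F_{i,j}$, and consequently the minimum spanning tree of $G$ is contained in $E(G')=\bigcup_{i,j}(\text{edges of }F_{i,j})$, which carries the original weights.

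To conclude, $G'$ is a spanning subgraph of $G$ that contains the minimum spanning tree of $G$; hence every spanning tree of $G'$ has weight at least that of the minimum spanning tree of $G$, while the minimum spanning tree of $G$ is itself available inside $G'$, so by uniqueness the minimum spanning tree of $G'$ equals that of $G$ (the same argument applied component-wise handles the disconnected case and the minimum spanning forest). I expect the only mildly delicate point to be the cycle-property step when $(V,E_{i,j})$ is disconnected, where ``heaviest edge on a cycle'' must be read relative to the tree of $F_{i,j}$ spanning the component containing $e$'s endpoints rather than a global spanning tree; this is routine once distinct weights are in force.
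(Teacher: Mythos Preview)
Your proposal is correct and follows essentially the same approach as the paper: the cycle property is used to argue that any edge discarded from some $E_{i,j}$ is the heaviest on a cycle in $G$ and hence not in the MST, and the $\sqrt{\Delta}$--partition edge bound is obtained by noting that $F_{i,j}$ is a forest on $V'_i\cup V'_j$ of size $\bigO(n/\sqrt{\Delta})$. Your write-up is in fact more careful than the paper's (you explicitly check the vertex-count condition, make the distinct-weights assumption explicit, and spell out why $\mathrm{MST}(G')=\mathrm{MST}(G)$ once containment is shown), but the core argument is identical.
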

\begin{proof}
  To claim that we preserve minimum spanning tree we use the cycle property\cite{Tarjan:1983:DSN:3485}. This property says that any edge that is the heaviest edge on some cycle in a graph $G$ cannot be in the minimum spanning tree of $G$. Here, we firstly observe that if an edge $\set{u,v}$ does not belong to the minimum spanning forest of $E_{i,j}$, then $u$ and $v$ have to be connected over the edges of $E_{i,j}$, and $\set{u,v}$ is heavier than all the edges on the path connecting $u$ and $v$. Therefore, there exists a cycle in $G$ such that the edge $\set{u,v}$ is the heaviest edge in this cycle. Hence, $\set{u,v}$ cannot belong to the minimum spanning tree of $G$.
  
  To justify the claim that the obtained partition is a $\sqrt{\Delta}$--partition we only need to give a bound on the number of edges between the sets $V'_i, V'_j$ that are preserved. For a pair of sets $V'_i , V'_j$ we only keep the edges from $F_{i,j}$. Since $\card{V'_i \cup V'_j} \in \bigO(n / \sqrt{\Delta})$, the size of the spanning forest $F_{i,j}$ on vertices from $V'_i \cup V'_j$ is also $\bigO(n / \sqrt{\Delta})$. This concludes the proof of \cref{lem:sp_alg}.
\end{proof}

\paragraph{\cclique implementation of \cref{alg:korhonen}}
The partition of vertices into sets depends only on the identifiers of vertices, therefore it can be carried out locally. To compute the spanning forests of $E_{i,j}$, Korhonen proposed that each pair $i,j$ gets a dedicated vertex of the clique (we call such vertex a coordinator), which gathers all edges of $E_{i,j}$ and computes its minimum spanning forest in the local memory. To show that this implementation can be carried out, it is enough to show that $\card{E_{i,j}} \in \bigO(n)$. 

The set of vertices incident to edges of $E_{i,j}$ consists of $\bigO(\sqrt{\Delta})$ sets $V_{\alpha}, V_{\alpha+1}, \dots, V_{\alpha+\Theta(\sqrt{\Delta})}$, that are part of a $\Delta$--partition of the graph. By definition of $\Delta$--partition there are at most $\bigO(n / \Delta)$ edges between vertices from sets $V_{\alpha_1}, V_{\alpha_2}$, for any $\alpha_1, \alpha_2 \in \{\alpha, \alpha+1, \dots, \alpha+\Theta(\sqrt{\Delta})\}$. Therefore, $\card{E_{i,j}} \in \bigO((\sqrt{\Delta})^2) \cdot \bigO(n/\Delta) = \bigO(n)$.

Our observation is that, in order to carry out the implementation of this step, we need only $\bigO(m/n)$ coordinator vertices, rather than $\Theta(\Delta^2)$. The reason is that each edge of the graph ends up being a member of $E_{i,j}$ for exactly one pair $i,j$. Therefore, the total size of the sets of edges we have to gather is $\bigO(m)$, and the maximal size is still $\bigO(n)$. Hence, $\bigO(m/n)$ coordinator vertices are enough to store the edges in all sets $E_{i,j}$. The assignment of pairs $i,j$ to processors can be done, for example, by a parallel prefix computation, i.e. for each pair $i,j$ we compute $\sum_{(i',j') \leq (i,j)}\card{E_{i',j'}}$ which is enough to compute the $\ID$ of processor that needs to handle $E_{i,j}$.

\subsection{Obtaining a graph with an $\bigO(\frac{m}{n})$--partition}\label{subsec:initial}
In this subsection, we provide a simple preprocessing that transforms an $n$ vertex, $m$ edge graph $G$ to an $\bigO(n)$ vertex, $\bigO(m)$ edge graph $G'$ with an $\bigO(m/n)$--partition, such that knowing the edges of the MST of $G'$ allows us to identify the edges of the MST of $G$. Applying \cref{alg:korhonen} on $G'$ gives us a graph with an $\bigO(\sqrt{(m/n)})$--partition, hence having only $\bigO(\sqrt{mn})$ edges, which concludes the proof of \cref{lem:mst_sparsification}.

\paragraph{Initial reduction} Firstly, we transform a graph $G$ with $n$ vertices, $m$ edges, average degree $\Delta_A = 2m / n$ into a graph $G'$ with $\bigO(n)$ vertices, and maximal degree $\Delta_A+2$, in such a way that computing the edges of the MST of $G'$ allows to identify the edges of the MST of $G$. To obtain $G'$, we split each vertex with degree $\delta > \Delta_A$ into $\ceil{\delta/\Delta_A}$ vertices of degree at most $\Delta_A+2$, connected by a path (consisting of newly introduced \emph{path edges}). To each new vertex we assign at most $\Delta_A$ edges corresponding to the edges in the original graph and at most $2$ path edges. To the introduced path edges we assign a weight that is smaller than all weights in the input graph.

\paragraph{Initial reduction -- implementation} Here, we discuss an implementation of the initial reduction that is suitable for \cclique and \MPC. To obtain a partition of vertices into vertices of degree at most $\Delta_A+2$, it is enough to gather all degrees of vertices in the memory of a single processor. This processor then decides for each vertex what is the number of vertices it has to be splitted into, and assigns the IDs to the newly created vertices. We assign the new IDs in a way that each vertex gets splitted into several vertices that get new IDs that form a sequence of consecutive numbers. This allows to communicate the number of vertices and their identifiers as two messages: one that is the number of vertices to be created, and the other that is the smallest ID of a created vertex. Therefore, the total number of messages needed to be send by the processor that computes the splitting is $\bigO(n)$, and sending those messages to appropriate processors be done in $\bigO(1)$ rounds.

\paragraph{Initial reduction -- preserving MST} Here, we explain that we can compute the edges of the MST of $G$ out of the edges of the MST of $G'$. Let us consider an execution of Kruskal's algorithm on $G'$. The Kruskal's algorithm considers the edges from the lightest to the heaviest, hence it considers all the path edges created by the initial reduction before the edges that correspond to the edges of $G$. After processing all the path edges it computes a set of connected components that correspond to the vertices of $G$. The remaining edges correspond to the edges of $G$, and all edges included in the MST of $G'$ from this point correspond to the edges of the MST of $G$. In other words, removing all edges of the MST that are the path edges introduced by the initial reduction leaves only the edges that correspond to the edges of MST of $G$.

\paragraph{Initial reduction -- the number of vertices} Here, we show the bound on the number of vertices of $G'$. We can think that the protocol assigning the edges to the new vertices assigns them greedily, i.e. all but last vertex is incident to $\Delta_A$ edges corresponding to the edges of the input graph. Therefore, we have at most $n$ new vertices with degree $< \Delta_A$. Furthermore, having more than $2n$ new vertices of degree $\Delta_A$ would imply that in the original graph $G$ the sum of degrees had to be larger than $2n\Delta_A = 2n\frac{m}{n}$. This is impossible as the sum of degrees in any graph is $2m$. Therefore, in the obtained graph we have at most $3n$ vertices with maximal degree no larger than $\Delta_A+2$. 

\paragraph{Computing $\bigO(\frac{m}{n})$--partition}
Here, we provide an algorithm that computes a $\bigO(\frac{m}{n})$--partition for the graph $G'$ obtained by the initial reduction. To that end, we use a slightly modified variant of \cref{alg:korhonen}. As an input we take a graph with degree bounded by $\Delta_A+2$, and we change the first line to define sets $V'_i$ as an arbitrary partition of $V$, such that each $|V'_i| \in \bigO(n/\Delta_A)$. The remaining part of the algorithm remains unchanged. The claim is that executing this variant of the algorithm on $G'$ gives us a graph with $\paren{\bigO(\Delta_A) =  \bigO(\frac{m}{n})}$-- partition.

The guarantees for the resulting graph follow from exactly the same analysis as that we have for \cref{alg:korhonen}. To show that the algorithm can be implemented, it is enough to show that  $\card{E_{i,j}} \in \bigO(n)$. Since we required that for each $i$, $\card{V_i} \in \bigO(n/\Delta)$, and the maximal degree is $\Delta_A+2$, the total number of edges incident to vertices in $V_i \cup V_j$ is bounded by $\bigO(n/\Delta_A) \cdot (\Delta_A+2) \in \bigO(n)$. Since $E_{i,j}$ consists only of the edges that are incident to $V_i \cup V_j$, $\card{E_{i,j}} \in \bigO(n)$. The argument that explains why all edges of the MST are preserved remains unchanged.

The only additional remark, regarding the algorithm that computes an $\bigO(\frac{m}{n})$--partition for $G'$, is that a single processor may simulate several vertices of $G'$. This could potentially lead to the case in which a single processor has to send or receive $\omega(n)$ messages. However, the number of messages that have to be sent to the coordinators and received from the coordinators is bounded by the degree of the vertices that are simulated by a single processor. Since the sum of degrees of all vertices simulated by a single vertex of degree $\delta$ is at most $\delta + 2 \cdot \delta / \Delta_A \leq 3\delta$, the overall number of messages to be send by a single vertex increases only by a constant factor, and communication still can be executed in $\bigO(1)$ rounds.

\section{Deterministic algorithm for the Spanning Forest problem}\label{sec:sf_algo}
In this section we propose an $\bigO(1)$ round deterministic algorithm that solves the Spanning Forest problem in \cclique and \MPC models.

\begin{thm:sf_algo}
The Spanning Forest problem can be solved deterministically in $\bigO(1)$ rounds of the \cclique model, using $\bigO(m)$ communication, where $m$ is the number of edges of the input graph.
\end{thm:sf_algo}

The remaining part of this section contains a proof of \cref{thm:sf_algo}. In the paper \cite{DBLP:conf/soda/Jurdzinski018}, we apply \cref{lem:deg} to reduce a single instance of a Spanning Forest problem to two instances:
\begin{itemize}
\item an instance that consists only from vertices that, in the input graph, have degree smaller than $s$ (although it does not necessarily contains all such vertices),
\item an instance for which we know a partition into at most $n/s$ connected components. 
\end{itemize}
For graphs that are almost regular, i.e., in which all vertices have degree $\Theta(\delta)$, for some parameter $\delta$, an algorithm based on \cref{lem:deg} together with sparsification algorithm from \cref{lem:mst_sparsification} can solve the Spanning Forest problem. 

Using the algorithm based on \cref{lem:deg} we can compute a partition into $\Omega(n/\delta)$ components. This is because setting $s \in \Theta(\delta)$ to be smaller than minimum degree leaves the first instance empty, and for the second instance it gives a partition into $\bigO(\frac{n}{\delta})$ connected components. 

\begin{definition}
Let $\mathcal{C} = {C_1, C_2, \dots}$ be a partition of vertices of a graph $G=(V,E)$ into connected components. The component graph $G_\mathcal{C}$ is a graph in which the set of vertices corresponds to the set of components from $\mathcal{C}$, and set of edges consists of edges of $G$ that are between the components from $\mathcal{C}$, that is for each edge $\set{u,v}$ such that $u \in C_i$ and $v \in C_j$ such that $i \neq j$ there is an edge between the vertices of $G_\mathcal{C}$ corresponding to $C_i$ and $C_j$. 
\end{definition}

Let $\mathcal{C_{\delta}}$ is a partition into components obtained by the algorithm based on \cref{lem:deg}, $G_{C_{\delta}}$ is a graph with $\bigO(n / \delta)$ vertices and $\bigO(n\delta)$ edges. Therefore, an application of \cref{lem:mst_sparsification} on $G_{C_{\delta}}$ gives us a graph with $\bigO\paren{\sqrt{n\delta\frac{n}{\delta}}} = \bigO(n)$ edges. Such graph can be gathered in the local memory of a single processor, and this processor can compute a spanning forest locally.

In this section, we show how to extend this approach to handle graphs that are not necessarily almost regular. In \cref{subsec:prep} we give a statement and a proof of \cref{lem:deg}. Then, in \cref{subsec:sf_algo}, we present a few observations about the graph obtained by an application of \cref{lem:deg}. Those observations, when combined with sparsification algorithm from \cref{lem:mst_sparsification}, prove \cref{thm:sf_algo}.

\subsection{A technique reducing the number of components} \label{subsec:prep}
In this subsection we recall a simple lemma from \cite{DBLP:conf/soda/Jurdzinski018,jurdznski_et_al:LIPIcs:2017:7990} that allows us to compute a partition into connected components with the following property: a vertex of degree $\delta$ is a member of a connected components of size at least $\delta+1$.
 
The algorithm that computes such partition has two stages. In each stage, for each vertex we choose a single edge. In the first stage, for each vertex $v$ we select an edge connecting $v$ to a neighbour with the highest degree. Then, in the second stage, if there are some edges incident to $v$ that were not used in the first stage, we select for each $v$ one of those edges. Then we compute connected components of a graph consisting of selected edges. We state a more precise formulation of this algorithm as \cref{alg:reduce}. 

\begin{algorithm}
  \SetAlgoLined
  \KwIn{A graph $G$}
  \KwOut{A set of connected components $\mathcal{C}$}
  each vertex $v$ marks an edge connecting it to a neighbour with the highest degree (break ties towards higher $\ID$)\\
  each vertex $v$ notifies all neighbours, whether the edge between them was marked\\
  each vertex $v$ marks an edge connecting it to a vertex $u$ that did not mark the edge $\set{u,v}$ (if such $u$ exists)\\
  each vertex $v$ sends the marked edges to the coordinator vertex \\
  the coordinator vertex computes the connected components using gathered edges\\
  \caption{REDUCE COMPONENTS}\label{alg:reduce}
\end{algorithm}

\begin{lemma}\cite{DBLP:conf/soda/Jurdzinski018,jurdznski_et_al:LIPIcs:2017:7990}\label{lem:deg}
  After execution of \cref{alg:reduce}, a vertex that has degree $\delta$ becomes a member of a component of size at least $\delta+1$.
\end{lemma}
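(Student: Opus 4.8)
The plan is to analyze the two-stage edge-marking process of \cref{alg:reduce} and show that the component containing any vertex $v$ of degree $\delta$ must contain at least $\delta$ other vertices. Fix such a vertex $v$, and let $u_1, u_2, \dots, u_\delta$ denote its neighbours. The key observation is that \emph{every edge $\set{v,u_i}$ gets marked} — either by $v$ in stage~1, by $u_i$ in stage~1, or by $v$ in stage~2. Indeed, in stage~1 each vertex marks exactly one incident edge; so among the $\delta$ edges at $v$, the one that $v$ marks in stage~1 is marked, and every other edge $\set{v,u_i}$ that $u_i$ chose to mark in stage~1 is also marked. Now suppose some edge $\set{v,u_i}$ was marked by neither $v$ nor $u_i$ in stage~1. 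Then in particular $u_i$ did \emph{not} mark $\set{v,u_i}$ in stage~1, so when $v$ reaches stage~3 of the algorithm (the second marking stage), the vertex $u_i$ qualifies as a neighbour that did not mark the edge to $v$; hence $v$ marks one such edge in stage~2 (it marks one if any exists, and here at least one does). This shows that at most one edge at $v$ can remain unmarked after stage~2, and that only if $v$'s stage-1 choice was the \emph{only} stage-1 mark among all $\set{v,u_i}$ and $v$ performs no stage-2 mark — but in that configuration all the other $\delta-1$ edges $\set{v,u_i}$ were marked by the $u_i$'s in stage~1, plus $v$'s own stage-1 edge, so in fact all $\delta$ edges are marked. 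Either way, all $\delta$ edges incident to $v$ survive into the graph of marked edges.

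Given that all $\delta$ edges $\set{v,u_1}, \dots, \set{v,u_\delta}$ are present in the marked-edge graph, the connected component of $v$ in that graph contains $v$ together with $u_1, \dots, u_\delta$, which are $\delta$ distinct vertices (distinct because they are distinct neighbours of $v$ in the simple input graph $G$). Hence the component has size at least $\delta + 1$. Since \cref{alg:reduce} outputs exactly the connected components of the marked-edge graph, the vertex $v$ ends up in a component of size at least $\delta+1$, as claimed.

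The main subtlety — and the step I would write most carefully — is the bookkeeping in the paragraph above: one must be careful that the stage-2 rule (''mark an edge to a $u$ that did not mark $\set{u,v}$, if such $u$ exists'') together with the stage-1 rule genuinely covers \emph{every} incident edge, and in particular that a vertex only ever declines to mark an edge in stage~2 when it has no eligible edge left, which happens precisely when every neighbour already marked its edge toward $v$. Once that case analysis is pinned down the size bound is immediate, so this is really the only place where care is needed. Everything else (the fact that the output is the set of connected components of the marked-edge graph, and that distinct neighbours are distinct vertices) is immediate from the statement of \cref{alg:reduce} and the assumption that $G$ is a simple graph.
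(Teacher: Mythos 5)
Your central claim---that \emph{every} edge $\set{v,u_i}$ incident to a fixed vertex $v$ ends up marked after the two stages---is false, and the argument you give for it breaks down. In stage~2, \cref{alg:reduce} has each vertex mark \emph{at most one} additional edge (``marks \emph{an} edge connecting it to a vertex $u$ that did not mark $\set{u,v}$''), not every eligible edge. So if several neighbours of $v$ declined to mark their edge toward $v$ in stage~1, the vast majority of those edges remain unmarked after stage~2. Concretely: let $v$ have degree $3$ with neighbours $a,b,c$, each of degree $2$, and let each of $a,b,c$ have one further neighbour $a',b',c'$ of degree $10$. In stage~1, $v$ marks the edge to (say) $c$, while $a,b,c$ each mark their edge to $a',b',c'$ respectively; in stage~2, $v$ marks exactly one of $\set{v,a}$, $\set{v,b}$, leaving the other unmarked, so not all of $v$'s neighbours land in $v$'s component. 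The internal bookkeeping in your proposal also does not hold together: you argue ``at most one edge can remain unmarked, and only if $v$'s stage-1 mark was the only stage-1 mark and $v$ performs no stage-2 mark,'' but under that hypothesis $v$ \emph{would} have an eligible edge in stage~2 and would mark it, so the case analysis never closes.

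The paper's proof does not try to show that $v$ retains all of its incident edges. Instead it picks the vertex $u$ with lexicographically largest $(\degree(u),\ID(u))$ in $v$'s component and shows that \emph{that} vertex keeps all of its neighbours in the same component: if some neighbour of $u$ were missing, then $u$ has a neighbour that did not choose $u$ in stage~1, so $u$ performs a stage-2 mark to some such $w$; that $w$ in stage~1 chose a vertex $u'$ with $(\degree(u'),\ID(u'))>(\degree(u),\ID(u))$, and $u'$ is now in the same component, contradicting the maximality of $u$. Since $\degree(u)\ge\degree(v)$, the component has at least $\degree(u)+1\ge\delta+1$ vertices. The crucial idea you are missing is this shift of attention from $v$ to the extremal vertex $u$ in $v$'s component---the lemma does \emph{not} assert that $v$'s neighbours end up in $v$'s component, only that the component is large.
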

\begin{proof}
Consider a vertex $v$ and let $u$ be the vertex with the lexicographically largest $(\degree(u),$ $\ID(u))$ in the connected component of node $v$. We claim that all neighbors of $u$ in the original graph are in the same connected component. Otherwise, $u$ has neighbors that did not choose $u$ in the first step; let $w$ be the neighbor among these that $u$ chose in the second step. Node $w$ chose to connect to some other vertex $u'$ such that $(\degree(u'), \ID(u')) > (\degree(u), \ID(u))$. But now $u$ is connected to $u'$ and the existence of such a node $u'$ in this component is in contradiction with the choice of $u$. Thus, all neighbors of $u$ are in the same component, which means that this component has at least $\degree(u)+1 \geq \degree(v)+1$ vertices.
\end{proof}

\subsection{Beyond the almost regular graphs} \label{subsec:sf_algo}
In this subsection we propose an algorithm for the Spanning Forest problem. It consists of three main parts. \begin{itemize}
\item The first part of the spanning forest algorithm is to run \cref{alg:reduce} on the input graph $G$ to obtain a set of connected components $\mathcal{C}$. 
\item In the second part, the algorithm computes a partition of the component graph $G_\mathcal{C}$ into edge disjoint graphs $G_1, G_2, \dots$ that have some desired properties, which we define in the later part of this subsection. Then, the spanning forest algorithm executes the sparsification algorithm from \cref{lem:mst_sparsification} on graphs $G_i$, for all $i$ in parallel. 
\item Finally, in the third part, the algorithm gathers the edges that span the components from $G_\mathcal{C}$ together with all remaining inter component edges in the memory of a single processor, and this processor then computes the spanning forest of $G$. 
\end{itemize}
We discuss the first part in \cref{subsec:prep} and the sparsification algorithm in \cref{subsec:korhonen}. Here we focus on putting those building blocks together. We provide a pseudocode of the Spanning Forest algorithm in \cref{alg:sf_algo}.

\begin{algorithm}[H]
  \SetAlgoLined
  \KwIn{A graph $G$}
  \KwOut{A spanning forest of $G$}
  \SetAlgoLined
  $\mathcal{C} \gets$ run Reduce\_components(G)\\
  partition $G_{\mathcal{C}}$ into edge disjoint $G_1, G_2, \dots$ with properties stated as \cref{fact:edges} and \cref{fact:vertices}
  apply sparsification alg. from \cref{lem:mst_sparsification} on $G_i$, for all $i$ in parallel\\
  gather all remaining inter component edges together with the edges used for computing $\mathcal{C}$ in a memory of the coordinator\\
  the coordinator computes the spanning forest using gathered edges\\
  \caption{SPANNING FOREST}\label{alg:sf_algo}
\end{algorithm}

The algorithm we propose is based on the intuition that applying \cref{lem:deg} should cause a significant reduction of the number of connected components in some sufficiently dense subgraphs. We show that one can partition the edges of $G_{\mathcal{C}}$ in such a way that, after an execution of the sparsification algorithm from \cref{lem:mst_sparsification} on each part of the partition
\begin{itemize}
\item we obtain a graph with $\bigO(n)$ edges in total
\item all edges of some spanning forest of the input graph are preserved
\end{itemize}

Let $\mathcal{C} = C_1, C_2, \dots$ are the connected components obtained by \cref{alg:reduce} applied on a graph $G$. We define:
\begin{itemize}
\item $|C_i|$ to be a weight of vertex $v_i \in G_{\mathcal{C}}$ that corresponds to $C_i$,
\item $V_j$ as a set of vertices of $G_{\mathcal{C}}$ of weight at least $2^{j-1}$ and less than $2^j$,
\item $x_j$ be a sum of weights of vertices in $V_j$,
\item $y \in \bigO(\log n)$ be the maximal index of a non empty $V_j$.
\end{itemize}

Let us consider graphs $G_1, \dots, G_y$, where $G_i$ is defined as follows. The set of vertices of $G_i$ consists of vertices in $V_i$ and all vertices from $\bigcup_{j=i}^{y} V_j$ that are neighbours of vertices of $V_i$. The set of edges of $G_i$ consists of the edges of $G$ have at least one endpoint in $V_i$ and other in $\bigcup_{j=i}^{y} V_j$. Below we make two observations regarding graphs $G_i$, stated as \cref{fact:edges} and \cref{fact:vertices}.
\begin{fact} \label{fact:edges}
The number of edges of $G_i$ is smaller than $x_i2^i$.
\end{fact}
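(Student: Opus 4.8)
The plan is to charge every edge of $G_i$ to an endpoint that lies in $V_i$ — one exists by the definition of $G_i$ — and then to count, for each such vertex $v$, how many edges of the component graph can be incident to it. Writing $\deg_{G_{\mathcal C}}(v)$ for the number of edges of $G_{\mathcal C}$ incident to $v$, this charging gives
\[
  \card{E(G_i)} \;\le\; \sum_{v\in V_i} \deg_{G_{\mathcal C}}(v),
\]
where an edge with both endpoints in $V_i$ is counted twice; since we only need an upper bound, this is harmless.

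The next step is to bound $\deg_{G_{\mathcal C}}(v)$ for a fixed $v\in V_i$. Such a $v$ corresponds to a component $C_v$ of $G$ with $\card{C_v}<2^i$, and $\deg_{G_{\mathcal C}}(v)$ equals the number of edges of $G$ with exactly one endpoint in $C_v$, which is at most $\sum_{w\in C_v}\degree(w)$. This is where \cref{lem:deg} enters: every vertex $w$ lies in a component of size at least $\degree(w)+1$, so for $w\in C_v$ we get $\degree(w)\le\card{C_v}-1<2^i$. Summing over the nonempty set $C_v$,
\[
  \deg_{G_{\mathcal C}}(v) \;\le\; \sum_{w\in C_v}\degree(w) \;<\; \card{C_v}\cdot 2^i .
\]

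Combining the two displays and recalling that $x_i=\sum_{v\in V_i}\card{C_v}$ yields
\[
  \card{E(G_i)} \;\le\; \sum_{v\in V_i}\deg_{G_{\mathcal C}}(v) \;<\; \sum_{v\in V_i}\card{C_v}\cdot 2^i \;=\; x_i 2^i,
\]
which is the claim (the degenerate case $V_i=\emptyset$ makes $G_i$ edgeless and is not used elsewhere). I do not expect a real obstacle here: the core is a one-line double count, and the only point requiring a moment's care is using the size guarantee of \cref{lem:deg} in the form ``small component $\Rightarrow$ every vertex inside it has degree below $2^i$'', together with the observation that double-counting the edges internal to $V_i$ can only inflate the bound.
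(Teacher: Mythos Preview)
Your proof is correct and follows essentially the same route as the paper: charge each edge of $G_i$ to its endpoint in $V_i$, use \cref{lem:deg} to deduce that every original-graph vertex lying in a component of weight $<2^i$ has degree $<2^i$, and sum over the $x_i$ such vertices. The paper phrases this directly at the level of original-graph vertices rather than going through $\deg_{G_{\mathcal C}}(v)$, but the argument is the same double count.
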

\begin{proof}
By definition, each edge of $G_i$ has an endpoint in set $V_i$, which consists of vertices that have weight less than $2^i$. In other words, each vertex in $V_i$ corresponds to a component computed by \cref{alg:reduce} that has size less than $2^i$. By \cref{lem:deg}, all vertices that are in such components have degree smaller than $2^i$. The number of the vertices of the original graph that form the components corresponding to the vertices in $V_i$ is $x_i$. Therefore, the total number of edges incident to those vertices is smaller than $x_i2^i$.
\end{proof}

\begin{fact} \label{fact:vertices}
The number of vertices of $G_i$ is at most $\frac{1}{2^i}\sum_{j=i}^{y} x_j/2^{j-1-i}$.
\end{fact}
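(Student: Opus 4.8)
The plan is to bound the vertex set of $G_i$ directly by the union $\bigcup_{j=i}^{y} V_j$, disregarding the adjacency refinement in the definition of $G_i$. By construction every vertex of $G_i$ lies either in $V_i$ or in some $V_j$ with $j \ge i$ that is a neighbour of $V_i$, so in either case it belongs to $\bigcup_{j=i}^{y} V_j$. Since the sets $V_j$ are pairwise disjoint (each collects the vertices of $G_{\mathcal{C}}$ whose weight lies in a fixed dyadic range), the number of vertices of $G_i$ is at most $\sum_{j=i}^{y} |V_j|$.

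Next I would estimate $|V_j|$. Each vertex of $V_j$ is a vertex of $G_{\mathcal{C}}$ whose weight is at least $2^{j-1}$; equivalently, it corresponds to a connected component of $\mathcal{C}$ containing at least $2^{j-1}$ vertices of $G$. These components are pairwise vertex-disjoint, and by the definition of $x_j$ the total number of vertices of $G$ they contain is exactly $x_j$. Hence $|V_j| \le x_j/2^{j-1}$. (One could instead route through \cref{fact:edges}, charging each neighbour of $V_i$ to one of the fewer than $x_i 2^i$ edges incident to $V_i$, but the crude disjointness bound already suffices and is cleaner.)

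Combining the two steps, the number of vertices of $G_i$ is at most $\sum_{j=i}^{y} x_j/2^{j-1}$. It then remains only to rewrite this in the stated form: since $2^{j-1} = 2^{i}\cdot 2^{\,j-1-i}$ term by term, we have $\sum_{j=i}^{y} x_j/2^{j-1} = \tfrac{1}{2^i}\sum_{j=i}^{y} x_j/2^{\,j-1-i}$, which is exactly the bound claimed in \cref{fact:vertices}.

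I do not expect a genuine obstacle here: the statement is a counting argument, and the only points needing care are (i) confirming that every vertex of $G_i$ lies in $\bigcup_{j\ge i} V_j$, which is immediate from the definition of $G_i$, and (ii) using the definition of $x_j$ as a sum of component sizes (not merely a count of vertices of $G_{\mathcal{C}}$), which is what makes the division by $2^{j-1}$ legitimate.
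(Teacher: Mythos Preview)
Your proposal is correct and follows essentially the same argument as the paper: bound the vertex set of $G_i$ by $\bigcup_{j\ge i} V_j$, use that every vertex of $V_j$ has weight $\ge 2^{j-1}$ together with the definition of $x_j$ to get $|V_j|\le x_j/2^{j-1}$, sum, and factor out $2^{-i}$. The paper's proof is the same two-step count, just stated more tersely.
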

\begin{proof}
By definition, the vertices of $V_i$ have weight at least $2^{i-1}$, which means that each vertex from $V_i$ corresponds to a component computed by \cref{alg:reduce} of size at least $2^{i-1}$. Since $x_i$ is exactly the number of the vertices of the original graph that form components corresponding to the vertices in $V_i$, the total number of vertices of $G_{\mathcal{C}}$ in $V_i$ is at most $x_i / 2^{i-1}$. By definition, the set of vertices of $G_i$ consists only of vertices from $\bigcup_{j=i}^{y} V_j$, hence it cannot be larger than $\sum_{j=i}^{y} x_j/2^{j-1} = \frac{1}{2^i}\sum_{j=i}^{y} x_j/2^{j-1-i}$.
\end{proof}

\paragraph{Reduction of the number of edges}
After execution of \cref{alg:reduce}, \cref{alg:sf_algo} obtains a set of $\bigO(\log n)$ graphs $G_1, G_2, \dots, G_y$ with properties stated as \cref{fact:edges} and \cref{fact:vertices}. The next step of \cref{alg:sf_algo} is to execute the algorithm from \cref{lem:mst_sparsification} to all graphs $G_1, G_2, \dots, G_y$, in parallel. Let $G_1^R, G_2^R, \dots, G_y^R$ be a set of obtained graphs. 

\begin{lemma}\label{lem:edge_reduce}
The total number of edges in $G_1^R, G_2^R, \dots, G_y^R$ is $\bigO(n)$.
\end{lemma}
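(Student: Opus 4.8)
The idea is to substitute the edge bound of \cref{fact:edges} and the vertex bound of \cref{fact:vertices} into the guarantee of \cref{lem:mst_sparsification}, and then sum the resulting per-graph bounds with the Cauchy--Schwarz inequality; the geometric weights $2^{i-j}$ in \cref{fact:vertices} are exactly what make the final sum collapse to $\bigO(n)$.

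First I would fix notation: let $m_i$ be the number of edges and $n_i$ the number of vertices of $G_i$. By \cref{lem:mst_sparsification}, the graph $G_i^R$ obtained from $G_i$ has $\bigO(\sqrt{m_i n_i})$ edges. (If it happens that $m_i \le n_i$, so that sparsification need not actually shrink $G_i$, we simply keep all $m_i \le \sqrt{m_i n_i}$ edges, so the bound $\bigO(\sqrt{m_i n_i})$ is valid in all cases.) Now plug in $m_i < x_i 2^i$ from \cref{fact:edges} and, simplifying the right-hand side of \cref{fact:vertices} as $\frac{1}{2^i}\sum_{j=i}^{y} x_j/2^{j-1-i} = \sum_{j=i}^{y} x_j/2^{j-1}$, the bound $n_i \le \sum_{j=i}^{y} x_j/2^{j-1}$. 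Multiplying,
\[
m_i n_i \;<\; x_i 2^i \cdot \sum_{j=i}^{y}\frac{x_j}{2^{j-1}} \;=\; 2\, x_i \sum_{j=i}^{y} x_j\, 2^{i-j},
\]
so the number of edges of $G_i^R$ is $\bigO\!\left(\sqrt{x_i \sum_{j=i}^{y} x_j 2^{i-j}}\right)$.

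The second step is to sum over $i = 1,\dots,y$ and apply Cauchy--Schwarz with $a_i = \sqrt{x_i}$ and $b_i = \sqrt{\,\sum_{j=i}^{y} x_j 2^{i-j}\,}$:
\[
\sum_{i=1}^{y}\sqrt{x_i \sum_{j=i}^{y} x_j 2^{i-j}} \;\le\; \sqrt{\sum_{i=1}^{y} x_i}\;\cdot\;\sqrt{\sum_{i=1}^{y}\sum_{j=i}^{y} x_j 2^{i-j}}.
\]
For the first factor, the numbers $x_i$ count pairwise disjoint sets of vertices of the original graph — each vertex lies in exactly one component of $\mathcal{C}$, whose size pins down a single index — so $\sum_{i=1}^{y} x_i = n$. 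For the second factor, swapping the order of summation,
\[
\sum_{i=1}^{y}\sum_{j=i}^{y} x_j 2^{i-j} \;=\; \sum_{j=1}^{y} x_j \sum_{i=1}^{j} 2^{i-j} \;<\; \sum_{j=1}^{y} x_j \cdot 2 \;=\; 2n,
\]
using $\sum_{i=1}^{j} 2^{i-j} = 1 + \tfrac12 + \dots + 2^{-(j-1)} < 2$. Hence the total number of edges in $G_1^R,\dots,G_y^R$ is $\bigO(\sqrt{n}\cdot\sqrt{2n}) = \bigO(n)$, which proves the lemma.

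\textbf{Expected obstacle.} There is no deep difficulty here: the proof is a direct substitution followed by Cauchy--Schwarz. The two points to be careful about are (i) justifying the per-graph bound $\bigO(\sqrt{m_i n_i})$ even when $G_i$ is already sparse (handled by the parenthetical remark above) and (ii) correctly identifying that the geometric series $\sum_{i\le j} 2^{i-j}$ is what absorbs the $2^i$ factor from the edge bound against the $2^{-(j-1)}$ factor from the vertex bound, yielding the constant that makes the double sum $\bigO(n)$ rather than $\bigO(n\log n)$.
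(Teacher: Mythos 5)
Your proof is correct, and it gives a slightly different argument from the one in the paper. After substituting \cref{fact:edges} and \cref{fact:vertices} into the $\bigO(\sqrt{m_i n_i})$ guarantee of \cref{lem:mst_sparsification} (which you both do), the paper does not use Cauchy--Schwarz. Instead it observes that the edge factor $x_i 2^i \cdot 2^{-i} = x_i$ is itself dominated by the vertex factor $S_i := \sum_{j=i}^{y} x_j / 2^{j-1-i}$, because the $j=i$ term of $S_i$ alone is $2x_i$; hence $\sqrt{x_i S_i} \le \sqrt{S_i \cdot S_i} = S_i$ pointwise, and one just sums $\sum_i S_i = \bigO(n)$ by swapping the order of summation. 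You instead keep $\sqrt{x_i \sum_{j\ge i} x_j 2^{i-j}}$ and apply Cauchy--Schwarz over $i$, splitting the sum into $\sqrt{\sum_i x_i}\cdot\sqrt{\sum_i \sum_{j\ge i} x_j 2^{i-j}}$, and then evaluate both factors — using $\sum_i x_i = n$ and the same sum-swap plus geometric-series bound for the second factor. The two routes rest on the same structural facts (the geometric decay of $x_j/2^{j-1}$ and $\sum_j x_j = n$) and both produce the clean $\bigO(n)$; the paper's is arguably more elementary since it needs only a single pointwise inequality, while yours trades the observation that $x_i \le S_i$ for the mechanical application of Cauchy--Schwarz, which makes the $\sqrt{n}\cdot\sqrt{n}$ structure of the bound a bit more visible. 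Your parenthetical remark that the $\bigO(\sqrt{m_i n_i})$ bound survives even when $G_i$ is already sparse is a reasonable sanity check that the paper leaves implicit.
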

\begin{proof}
Using \cref{fact:edges,fact:vertices} we have that an execution of the algorithm from \cref{lem:mst_sparsification} on $G_i$, gives a reduced graph $G_i^R$ with the number of edges that can be bounded by $\bigO\paren{\sqrt{x_i2^i \cdot \frac{1}{2^i}\sum_{j=i}^{y} x_j/2^{j-1-i}}}$. The expression under the $\bigO$ notation can be bounded as follows.

$$\sqrt{x_i2^i \cdot \frac{1}{2^i} \cdot \sum_{j=i}^{y } x_j/2^{j-1-i}}
\leq\sqrt{\paren{\sum_{j=i}^{y } x_j/2^{j-1-i}} \cdot \paren{\sum_{j=i}^{y } x_j/2^{j-1-i}}}
= \sum_{j=i}^{y } x_j/2^{j-1-i}$$ 
Therefore, the total number of edges in all reduced graphs is $\bigO(\sum_{i=1}^{y}\sum_{j=i}^{y}  x_j/2^{j-1-i})$. To give the desired bound on this sum, we look on the contribution to the sum from the point of view of the vertices from $V_j$. More precisely, the set $V_j$ contributes something only to the sums that start with such indices $i$ that $i \leq j$. The amount $V_j$ contributes to the sum starting with the specific $i$ is at most $x_j/2^{j-1-i}$. We observe that all contributions of a single set $V_j$ form a geometric series and this observation allows us to get the desired bound. The following rearrangement of the summation corresponds to this change of the point of view argument:

$$
\mathlarger{\mathlarger{\bigO}}\paren{\sum_{i=1}^y\sum_{j=i}^{y}  x_j/2^{j-1-i}} = 
\mathlarger{\mathlarger{\bigO}}\paren{\sum_{j=1}^y\sum_{i=1}^{j}  x_j/2^{j-1-i}} =
\mathlarger{\mathlarger{\bigO}}\paren{\sum_{j=1}^y 4 x_j} = \bigO\paren{n}
$$

\end{proof}

\paragraph{The final step}
The spanning forest algorithm [\cref{alg:sf_algo}] executes \cref{alg:reduce} to find a partition into several component graphs, on which we apply sparsification algorithm from \cref{lem:mst_sparsification}. As a result we obtain a set of $\bigO(n)$ inter component edges, that contains all edges of some spanning forest of the component graph. 

To obtain a spanning forest of the input graph it is sufficient to gather in the memory of a single processor
\begin{itemize}
\item all remaining inter component edges, i.e., all edges that we get as a result of an execution of the algorithm from \cref{lem:mst_sparsification} on graphs $G_i$, for all $i$, 
\item all edges used in \cref{alg:reduce} . 
\end{itemize}
Then, this processor can compute the spanning forest of the input graph $G$, by computing the spanning forest of the graph $G'$ consisting of the gathered edges. 

To see that a spanning forest of $G'$ is also a spanning forest of $G$, let us take a closer look on the edges of $G'$. The edges used in \cref{alg:reduce} provide that any two vertices of $G$ that after the execution of \cref{alg:reduce} are in a single component of $\mathcal{C}$ are connected by a path in $G'$. Adding the inter component edges to the edges used in \cref{alg:reduce} provides that any two vertices that are in a single connected component in $G$, but in different components in $\mathcal{C}$, are connected by a path in $G'$. Therefore, any two vertices that are connected by a path in $G$ are also connected by a path in $G'$. Thus, a spanning forest of $G'$ is also a spanning forest of $G$. 

\subsubsection{\cclique implementation}
So far, \cref{sec:sf_algo} discuss the building blocks of \cref{alg:sf_algo}. Here, we show that this algorithm can be executed in the \cclique model and in the \MPC model, i.e., we discuss an implementation of each step of \cref{alg:sf_algo}.

Firstly, \cref{alg:reduce} can be clearly implemented in \cclique, as it requires only communication over the edges of the input graph, and one coordinator vertex that computes the partition into connected components. After that, each vertex knows the edges which belong to $G_i$, for each $i \leq y$.

Then, we need to run several instances of sparsification algorithm from \cref{lem:mst_sparsification} in parallel. There are two parts that we need to address:
\begin{itemize}
\item an execution of the initial reduction that reduces the maximal degree to average degree, in parallel;
\item an execution of many instances of algorithm \cref{alg:korhonen} in parallel.
\end{itemize}
Those two parts are the only parts of our Spanning Forest algorithm that have a non trivial implementation. Therefore we address it only after we explain the implementation of the final step.

The final part of the \cref{alg:sf_algo} can be implemented trivially. The total number of edges of $G'$ is $2n + \bigO(n) = \bigO(n)$. Therefore, we can gather them in the memory of a single processor in a constant number of rounds. 

\paragraph{Initial reduction for a component graph, in parallel}
The initial reduction from \cref{subsec:initial} is defined for a graph, and here we need to apply it on the graphs $G_1, G_2, \dots, G_y$. To recall, the vertices of $G_1, G_2, \dots, G_y$ correspond to the connected components $\mathcal{C} = \set{C_1, C_2, \dots,}$ of $G$. Let us consider a single $G_i$. The problem we face here it that a single vertex of $G_i$ may consist of many vertices of $G$. Therefore, it is possible that a single processor does not see all the edges that are incident to a single vertex of $G_i$. 

Still, we claim that the partition of high degree vertices of $G_i$ into vertices of degrees at most $\Delta_A + 2$ can be executed almost as for a normal graph. In the first step, each vertex $v$ of $G$ counts the incident edges that belong to $G_i$, and connect $v$ to some other component of $\mathcal{C}$. Then, $v$ sends this number (let us call it the $G_i$-degree of $v$), together with a number $j$, such that $v \in C_j$, to the coordinator processor.

The processor that knows the $G_i$-degrees of all vertices of $G$, and for each vertex of $G$ knows the ID of its component in $\mathcal{C}$, can compute the degree of the vertices in $G_i$. Then, as in the case of normal graphs, for each vertex of $G_i$ the coordinator can compute the number of parts it has to be splitted into.

Let us consider a case, when the coordinator needs to split a vertex of $G_i$ that corresponds to the component $C_j$ into some number of new vertices. Let $\set{v_1, v_2, \dots, v_k}$ be the vertices of $G$ that are in $C_j$. Then, the coordinator needs to compute an assignment of vertices $\set{v_1, v_2, \dots, v_k}$ to the splitted vertices. This can be done in a greedy way.
 
To picture the greedy assignment, we can imagine that a vertex of degree $\delta$ is a block of height $1$ and length $\delta$. Then, we put together the blocks of all vertices $\set{v_1, v_2, \dots, v_k}$, creating one long block of height $1$ and length that is the degree of $C_j$ in $G_i$. Then, we split this long block into pieces of length $\Delta_A$ (the last one may be shorter). Each of the pieces corresponds to a single vertex of the low degree graph that we want to compute. 

Now, the coordinator needs only to notify all vertices of $G$ which pieces overlap with their block. Furthermore, for the first and last overlapping piece, the coordinator needs to specify the size of the overlap. 

For each vertex, the information (ID and overlap size) about the first piece, the last piece and the number of pieces that are neither first or last, can be encoded on $4$ messages (ID of the first piece, overlap with the first piece, ID of the last piece, overlap with the last piece). Therefore, the coordinator has to send at mist $\bigO(1)$ messages per vertex, and $\bigO(n)$ messages in total. 

Then, the vertices of $G$ locally assign particular edges to particular pieces. For each edge, the processor exchanges the assignment with the processor that holds the other endpoint of the edge. This way, for each edge $e$ both processors holding $e$ know the new IDs of the endpoints of $e$. Therefore, it is possible to determine $E_{i,j} \ni e$ which is enough to run \cref{alg:korhonen}. 

To show that this algorithm can be executed in parallel, for all $G_i$ simultaneously, it is enough to show that any vertex does not need to send too many messages. Since there are only $\bigO(\log n)$ instances, and a single vertex sends at most $\bigO(1)$ messages to a coordinator, the communication per vertex is $\bigO(\log n)$. Furthermore, we can bound the total communication by $\bigO(m)$ -- this follows from that a message is sent by a vertex $v$ to the coordinator of the $i$th instance only if there is an edge incident to $v$ in $G_i$. Since $G_i$ are edge disjoint, and we have at most two messages per edge, the total number of messages is $\bigO(m)$.

\paragraph{\cref{alg:korhonen} in parallel}
To explain that \cref{alg:korhonen} can be executed for all $G_i$ simultaneously, in parallel, we use a similar argument as for the initial reduction. The number of messages that are send by a single vertex of $G$ in the instance of \cref{alg:korhonen} for the graph $G_i$ is proportional to its $G_i$-degree. Therefore, even though a single vertex may participate in many instances of the algorithm, the total number of messages it sends cannot be larger than the sum of degrees of all vertices that it simulates. Since the instances are edge disjoint, it is $\bigO(n)$ per vertex. 

On the coordinator side, nothing changes with respect to the original analysis of \cref{alg:korhonen}, as each coordinator receives $\bigO(n)$ edges. Furthermore, since the initial reduction increases the number of edges at most by some constant factor, the number of coordinators remains $\bigO(m/n)$. 

To summarize, we have that any processor sends and receives $\bigO(n)$ messages and the total number of messages is $\bigO(m)$. Therefore, the parallel execution of \cref{alg:korhonen} can be carried out in \cclique and its communication complexity is $\bigO(m)$.

\section{The Algorithm for the Minimum Spanning Tree problem}\label{sec:msf_algo}
In this section we show that our algorithm for the Spanning Tree problem can be applied to the Minimum Spanning Forest, proving \cref{thm:msf_algo}.

\begin{thm:msf_algo}
The Minimum Spanning Tree problem can be solved deterministically in $\bigO(1)$ rounds of the \cclique model, using $\bigO(m)$ communication, $m$ is the number of edges of the input graph.
\end{thm:msf_algo}

We prove \cref{thm:msf_algo} using \cref{lem:mst_sparsification} and \cref{thm:sf_algo}. Firstly, let us recall the reduction from a single instance of the \MST problem to several instances of the Connected Components problem \cite{Hegeman:2015:TOB:2767386.2767434}. More precisely, we give a variant of this reduction that is deterministic, as it is based on the deterministic sparsification algorithm from \cref{lem:mst_sparsification} rather than on the randomized sparsification technique by Karger et al. \cite{Karger:1995:RLA:201019.201022}. 

\begin{lemma}\label{lem:msf_to_cc}
  There is a deterministic, $\bigO(1)$ round \cclique algorithm that reduces the problem of computing the \MST of an $n$ vertex, $m$ edge graph $G$ to $\bigO(\sqrt{m/n})$ independent instances of the Connected Components problem, such that the total number of edges in obtained instances is $\bigO(m)$.
\end{lemma}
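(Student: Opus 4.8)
The plan is to derandomize the reduction of Hegeman et al.\ \cite{Hegeman:2015:TOB:2767386.2767434}, using \cref{lem:mst_sparsification} in place of its randomized sparsification step. First I would sort the $m$ edges of $G$ by weight (breaking ties by endpoint identifiers) in $\bigO(1)$ rounds via Lenzen's routing, and cut the sorted list into $k := \ceil{\sqrt{m/n}}$ consecutive \emph{buckets} $B_1, \dots, B_k$, each of size $\bigO(\sqrt{mn})$; the bucket of an edge is then known locally. The structural fact I would rely on, proved with the cycle property, is that replacing a bucket $B_\ell$ by its minimum spanning forest $F_\ell$ never removes an \MST edge of $G$ (a dropped edge is heaviest on a cycle of $B_\ell$, hence of $G$): so $\mathrm{MST}(G)\subseteq\bigcup_\ell F_\ell$, the components of every prefix $P_i:=B_1\cup\dots\cup B_i$ coincide with those of $\bigcup_{\ell\le i}F_\ell$, and, writing $\pi_{i-1}$ for the partition of $V$ into components of $P_{i-1}$, the set $\mathrm{MST}(G)\cap B_i$ is exactly the minimum spanning forest of $F_i$ with $\pi_{i-1}$ contracted.

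Given this, the reduction runs in a constant number of parallel batches. In the first batch I compute, for every $\ell$ in parallel, the minimum spanning forest $F_\ell$ of $(V,B_\ell)$; a bucket is itself an \MST instance on $\bigO(\sqrt{mn})$ edges, so these are $k$ independent sub-instances handled by \cref{thm:sf_algo} together with the \MST-preserving sparsification of \cref{lem:mst_sparsification}, and what matters for the sequel is only that $\card{F_\ell}\le n-1$. This bound is precisely what decouples the prefix-connectivity computations: once the $F_\ell$ are available, each prefix graph $\bigcup_{\ell\le i}F_\ell$ is explicitly known, has at most $i(n-1)=\bigO(\sqrt{mn})$ edges, and is independent of every other prefix. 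So the second batch is: for all $i$ in parallel, a Connected Components (equivalently Spanning Forest) computation on $\bigcup_{\ell\le i}F_\ell$, producing $\pi_i$; the total number of edges over these $k$ instances is $\sum_{i=1}^{k}i(n-1)=\bigO(k^2n)=\bigO(m)$, using $k=\Theta(\sqrt{m/n})$. Finally, since each contracted graph $F_i/\pi_{i-1}$ has at most $n-1$ edges and at most $n$ vertices, a dedicated coordinator for bucket $i$ gathers $F_i$ and $\pi_{i-1}$ — in total $\bigO(kn)=\bigO(m)$ messages over all buckets — and outputs $\mathrm{MST}(G)\cap B_i=\mathrm{MSF}(F_i/\pi_{i-1})$ locally; their union is $\mathrm{MST}(G)$. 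Every piece of data movement (the sort, dispatching the $B_\ell$, broadcasting the $F_\ell$ to the prefix instances, gathering the $F_i$ and $\pi_i$) is $\bigO(m)$ messages and $\bigO(1)$ rounds under Lenzen's routing and the load-balancing of \cref{sec:sparsify}; in particular an edge of $F_\ell$ is copied into the $k-\ell+1$ prefix instances $i\ge\ell$, and $\sum_\ell\card{F_\ell}(k-\ell+1)=\bigO(nk^2)=\bigO(m)$, so even with copies the communication remains $\bigO(m)$.

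The main obstacle is exactly this decoupling, since $\pi_i$ ``wants'' $\pi_{i-1}$ and a literal implementation would cost $\Theta(\sqrt{m/n})$ sequential rounds. Replacing every bucket by its $(n-1)$-edge minimum spanning forest up front collapses this chain into two independent parallel batches, and the quantitative heart of the argument is then the bound $\sum_{i=1}^{k}i\cdot\bigO(n)=\bigO(m)$, which holds precisely because the number of buckets is $\Theta(\sqrt{m/n})$ and each has size $\Theta(\sqrt{mn})$. A secondary, routine point is checking \MST-preservation throughout via the cycle property — both when a bucket is replaced by its minimum spanning forest and when $\pi_{i-1}$ is contracted into $F_i$ in place of the whole of $P_{i-1}$ — and verifying that every sub-instance produced is a Connected Components (or per-bucket Spanning Forest) problem with total size $\bigO(m)$, so that solving them all in parallel with \cref{thm:sf_algo} yields \cref{thm:msf_algo}.
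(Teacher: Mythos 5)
Your plan correctly identifies the ingredients (Hegeman et al.'s Kruskal-style decomposition, Lenzen's sort, the cycle property, and substituting \cref{lem:mst_sparsification} for KKT sampling), and your final accounting $\sum_{i=1}^k i\cdot\bigO(n)=\bigO(m)$ is the same as the paper's. But the order in which you apply sparsification creates a gap that the paper's proof avoids, and that gap is not cosmetic.

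You sort all $m$ edges first and cut them into $k=\Theta(\sqrt{m/n})$ buckets of size $\Theta(\sqrt{mn})$, and then in your ``first batch'' you claim to compute, for each bucket $B_\ell$, the exact minimum spanning forest $F_\ell$ of $(V,B_\ell)$ in $\bigO(1)$ rounds, using ``\cref{thm:sf_algo} together with \cref{lem:mst_sparsification}.'' Neither tool delivers this. \cref{thm:sf_algo} produces an \emph{unweighted} spanning forest and does not respect edge weights, so it does not give $F_\ell$. A single application of \cref{lem:mst_sparsification} to a graph with $\bigO(\sqrt{mn})$ edges and $n$ vertices produces a graph with $\bigO\bigl(\sqrt{\sqrt{mn}\cdot n}\bigr)=\bigO\bigl(m^{1/4}n^{3/4}\bigr)$ edges — this is $\omega(n)$ whenever $m=\omega(n)$, so it cannot be gathered on one machine and the exact $F_\ell$ is still out of reach. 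Iterating the sparsification drives the size to $\bigO(n)$ only after $\Theta(\log\log n)$ applications, which is exactly Korhonen's round bound, not $\bigO(1)$; and invoking \cref{thm:msf_algo} here is circular, since \cref{lem:msf_to_cc} is an ingredient of its proof. Everything downstream in your argument — the bound $\card{F_\ell}\le n-1$ that makes the prefix sums $\bigO(m)$, and the final step where a coordinator holds $F_i$ and $\pi_{i-1}$ locally — depends on that unavailable exact per-bucket \MSF.

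The paper sidesteps this by reordering the two steps: it applies \cref{lem:mst_sparsification} \emph{once to the whole graph}, obtaining an equivalent \MST instance with $m'=\bigO(\sqrt{mn})$ edges, and only then sorts and splits into $m'/n=\bigO(\sqrt{m/n})$ consecutive sets $E_1,\dots$ each of size $\Theta(n)$. Because each $E_i$ already has only $\Theta(n)$ edges, a single coordinator can hold $E_i$ together with the component partition $\mathcal{C}_i$ of the prefix $\bigcup_{j<i}E_j$ and run the Kruskal step locally; no per-bucket \MST computation is needed at all. The Connected Components instances are exactly the prefix graphs $\bigcup_{j<i}E_j$, with total size $\sum_i \bigO(i\cdot n)=\bigO(m)$, which is the same sum you wrote. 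So your structural lemma (per-bucket \MSF plus the cycle/contraction argument) is fine mathematically, but it asks for more computation than the $\bigO(1)$-round budget affords; sparsify globally first, and the per-bucket sparsification you need becomes unnecessary.
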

\begin{proof}
On the top level, the reduction relies on some properties of Kruskal's algorithm for the MST problem. In particular, Kruskal's algorithm uses the following property. An edge $e$ is added to MST iff its endpoints belong to different connected components of the graph containing only the edges of $G$ that are lighter than $e$. 

The idea proposed by Hegeman et al. \cite{Hegeman:2015:TOB:2767386.2767434} is as follows. Firstly, we sort the edges by weight, using $\bigO(1)$ sorting algorithm by Lenzen \cite{Lenzen:2013:ODR:2484239.2501983}\footnote{This step uses $\Theta(n^2)$ communicates, but it is only because it uses a routing protocol; if we have routing for free, the communication complexity is proportional to the number of sorted elements}. Then, we split the sorted sequence of edges into sets $E_1, E_2, \dots, E_{m/n}$, each of size $n$. For each $i \in [1,m/n]$ we compute the connected components $\mathcal{C}_i = (C_1, C_2, \dots)$ of a graph with edges $\bigcup_{j=1}^{i-1} E_j$. To do so, we use a Connected Components algorithm, in parallel. Then a single processor can gather $\mathcal{C}_i$ and $E_i$ in the local memory, and simulate the steps of Kruskal's algorithm on $E_i$. That is, given the connected components of graph consisting of the edges $\bigcup_{j=1}^{i-1} E_j$, this processor can process all edges of $E_i$, in the order from the lightest to the heaviest. For each edge $e$ that is processed, the processor knows the connected components of the graph containing only the edges of $G$ that are lighter than $e$. Therefore, this processor can decide whether $e$ belongs to the MST of $G$.

The only issue with using this approach in a straightforward way is that starting from $m$ edge graph, this gives $\frac{m}{n}$ instances of the Connected Components problem with total size that could be $\Theta((m/n)^2) \cdot \Theta(n)$. To bypass this issue, Hegeman et al. used a random sampling approach proposed by Karger, Klein, and Tarjan \cite{Karger:1995:RLA:201019.201022} that can be used to reduce a single instance of the \MST problem to two instances of this problem that have to be executed one after the other, each of size $\bigO(\sqrt{mn})$. Here, we replace the randomized sparsification algorithm with the deterministic algorithm from \cref{lem:mst_sparsification}.

For an input graph with $\bigO(\sqrt{mn})$ edges, the reduction by Hegeman et al. \cite{Hegeman:2015:TOB:2767386.2767434} gives $\bigO(\sqrt{m/n})$ instances of the Connected Components problem with $\bigO((\sqrt{m/n})^2) \cdot \Theta(n) = \bigO(m)$ edges in total. Still, in the memory of the processors, we have only sets $E_i$, for $i \in [1, \sqrt{m/n}]$, and some of them participate in many instances of the Connected Components problem. In order to make it clear that we can easily solve those several instances in parallel, we show that we can duplicate some of the sets $E_i$, so that each instance of the Connected Components problem has its own copy of $E_i$. Our goal is to have $k$ copies of an edge that appears in $k$ instances. This allows us to provide a rather clean way of running the Spanning Forest algorithm in parallel.

To perform the duplication efficiently, we assign $k-1$ helper processors to each processor $P$ that holds a set of $\Theta(n)$ edges that should participate in $k$ instances. Let $P_k$ denotes the set consisting of this processor and its helper processors. Duplication can be executed in two stages. In the first stage, each processor of $P_k$ receives from $P$ a part of the set of edges of size $\Theta(n/k)$. In the second stage, each processor from $P_k$ sends the received part to all other processors in $P_k$. 

The first step of duplication can be executed as $P$ sends $\bigO(n)$ edges in total, and each processor in $P_k$ receives only $\bigO(n/k)$ messages. Then, in the second step, each processor from $P_k$ sends $\card{P_k} \cdot \bigO(n/k) = \bigO(n)$ messages. Finally, each processor in $P_k$ receives the set of all edges that was stored in the memory of $P$, and its size is $\Theta(n)$. Furthermore, since after the duplication each helper processor keeps $\Theta(n)$ edges and the total number of edges is 
$\bigO(m)$, we need only $\bigO(m/n)$ helper processors in total.
\end{proof}

\subsection{A parallel execution of many instances of the algorithm for Spanning Forest problem}
In this subsection, we show that we can solve the instances of Connected Components problem obtained by the reduction from \cref{lem:msf_to_cc} using the Spanning Forest algorithm from \cref{thm:sf_algo} in parallel, which completes the proof of \cref{thm:msf_algo}.

\paragraph{A short argument} The Spanning Forest algorithm from \cref{thm:sf_algo} is an \MPC algorithm (see \cite{DBLP:journals/corr/abs-1802-10297} and our analysis of the communication complexity throughout the paper). The reduction that reduces a single instance of the MST problem to several instances of the Connected Components, $\mathit{CC}_1, \mathit{CC}_2, \dots, \mathit{CC}_{\sqrt{\smash[b]{m/n}}}$. The \cclique model can execute the \MPC Spanning Forest algorithm on all those instances in $\bigO(1)$ rounds. 

The more precise explanation is as follows. Let $m_i$ be the size of the instance $\mathit{CC}_i$. An \MPC spanning forest algorithm that solves $\mathit{CC}_i$ needs only $\bigO(m_i)$ global communication and it can be simulated in \cclique on $\bigO(m_i/n)$ processors, with the help of Lenzen's routing protocol \cite{Lenzen:2013:ODR:2484239.2501983}. Furthermore, we can simulate several instances of the Spanning Forest algorithm, just by assigning disjoint sets of processors to different instances of the Spanning Forest algorithm. This is because a single instance of the Lenzen's routing protocol can handle \MPC-like communication for all instances of the Spanning Forest algorithm, simultaneously.

Therefore, as long as the total number of processors remains $\bigO(n)$ and the total global communication is $\bigO(n^2)$, all instances of the \MPC Spanning Forest algorithm can be executed simultaneously. Furthermore, the total global communication is proportional to the global communication of all instances. Therefore, we can solve the instances $\mathit{CC}_1, \mathit{CC}_2, \dots, \mathit{CC}_{\sqrt{\smash[b]{m/n}}}$ using  $\sum_{i=1}^{\sqrt{\smash[b]{m/n}}} \bigO(m_i) = \bigO(m)$ global communication.

\paragraph{Simulation in \cclique}
In the remaining part of this section, for the sake of completeness, we give some details of the parallel execution of the Spanning Forest algorithm from \cref{thm:sf_algo} in the \cclique model, without referring to the simulation of \cclique algorithms in the \MPC model.

We are given sets of edges of $\bigO(\sqrt{m/n})$ graphs, with $\bigO(m)$ edges in total. Our goal is to compute a representation of each of those graphs that is a \emph{vertex partition}. More precisely, we want that each processor instead of getting an arbitrary set of edges, gets a set of vertices and all edges incident to them (that is a \emph{vertex partition} of the input). For each edge the processor needs to know the ID of a processor holding the other endpoint. Furthermore, we want to partition the vertices in such a way that executing the Spanning Forest algorithm for all graphs in parallel can be efficiently simulated by the processors of the \cclique. Basically, the goal is to partition the vertices of all instances in such a way that:
\begin{itemize}
  \item each processor simulates vertices that have $\bigO(n)$ incident edges in total,
  \item a processor simulating a particular vertex $v$ in any instance of the Connected Components problem, knows all the edges incident to $v$ in the considered instance,
  \item a processor simulating a particular vertex $v$ in any instance of the Connected Components for each edge $\set{v,u}$ knows the ID of the processor simulating $u$ in that instance of the Connected Components problem.
\end{itemize}

Clearly, those three properties guarantee that we can execute communication between the neighbours in the simulated graphs. Furthermore, it also implies that communication with coordinators in \cref{alg:korhonen} and \cref{alg:reduce} can be executed efficiently. 

In \cref{alg:korhonen} the number of messages that a single vertex $v$ sends is proportional to its degree. Since, the sum of degrees of all vertices simulated by a single processor is $\bigO(n)$, the total number of the messages that the processor needs to send is also $\bigO(n)$. In \cref{alg:reduce} a vertex communicates with the coordinator, only if it has a non zero degree in the instance that uses this coordinator. Since the instances are edge disjoint, the total number of messages send by one vertex is always no larger than the sum of degrees of simulated vertices, which is $\bigO(n)$. 

\paragraph{Partition of simulated vertices}
To obtain a partition of simulated vertices allowing the parallel execution of the Spanning Forest algorithm, we do the following. We start by copying each edge $\set{u,v}$ twice, we create one copy for an edge \emph{outgoing} from $u$, and one copy for an edge \emph{outgoing} from $v$. Then, we sort this set of edges, to assure that all edges \emph{outgoing} from a single vertex $v$ are in the memory of a single processor (and we have that for all $v$ simultaneously). 

We can sort the edges using the $\bigO(1)$ round sorting algorithm \cite{Lenzen:2013:ODR:2484239.2501983}, but as a result of sorting, we do not have guarantee that for each $v$ we see all the edges in the memory of a single processor. If a processor with ID $x$ does not see all edges that are incident to some vertex $v$, those edges are in the memory of a processor with ID $x \pm 1$. Therefore, a a processor with ID $x$ can communicate with processors with ID $x \pm 1$; if some two processors have the edges that are \emph{outgoing} from a single vertex $v$ the one with smaller ID can send them to the one with larger ID, which can be done in $\bigO(1)$ rounds.

From now on, a processor that holds the edges \emph{outgoing} from $v$ simulates $v$, and the remaining work we have to do is to find which processors simulate the other endpoints of the edges incident to $v$. To do so, we again use a sorting algorithm. To each of two copies of an edge we attach additional information stating which processor simulates one of the endpoints. Then, we sort the edges by the endpoints. As a result, some processor $p$ sees two copies of the edge, each having $\ID$ of a processor simulating one endpoint. Then, $p$ notifies the processors simulating the endpoints of an edge, what is the $\ID$ of a processor simulating the other endpoint.

Since as a result of sorting each processor gets $\bigO(n)$ edges, it has to send at most $\bigO(n)$ notifications. Furthermore, each processor simulates vertices of degrees that sum to $\bigO(n)$, therefore, it need to receive $\bigO(n)$ notifications. Thus, this step can be executed in $\bigO(1)$ rounds.

\subsection*{Acknowledgment}
We are grateful to Mohsen Ghaffari and Tomasz Jurdziński for all discussions on the MST problem and the $k$--out contraction technique, as those pushed us in the right direction. We also thank for all their comments that helped to improve the clarity of this paper.

\bibliographystyle{alpha} 
\bibliography{ref}

\end{document}